
\documentclass[conference,letterpaper]{IEEEtran}

\addtolength{\topmargin}{9mm}

%
%
\usepackage[utf8]{inputenc} 
\usepackage[cmex10]{amsmath}
\usepackage[T1]{fontenc}
\usepackage{amsthm}
\usepackage{url}
\usepackage{ifthen}
\usepackage{cite}
\usepackage{physics}
\usepackage{graphicx}

\newtheorem{definition}{Definition}

\newtheorem{lemma}{Lemma}

\usepackage{enumerate}
\usepackage{verbatim}
\usepackage{enumerate}
\usepackage{framed,xspace}
\usepackage{epsfig}
\usepackage{setspace}
\usepackage{bbm}
\usepackage{dsfont}
\usepackage{ifthen}
\pagestyle{plain}
\usepackage{xcolor}
\usepackage{stix}
\usepackage[T1]{fontenc}
\usepackage{nicefrac}
\usepackage{enumitem}
\usepackage{thmtools}
\usepackage{thm-restate}
\usepackage{hyperref}
\usepackage{cleveref}



\interdisplaylinepenalty=2500 

\hyphenation{op-tical net-works semi-conduc-tor}


\newcommand{\eps}{\varepsilon}

\newcommand{\mX}{\mathcal X}
\newcommand{\mY}{\mathcal Y}

\newcommand{\mA}{\mathcal A}

\newcommand{\mS}{\mathcal S}
\newcommand{\mZ}{\mathcal Z}

\newcommand{\mE}{\mathcal E}

\newcommand{\K}[2]{mss({#1},{#2})}

\def\01{\{0,1\}}

\newcommand{\OT}[3]{{\binom{#2}{#1}}{\textsf{-OT}^{#3}}}

\newcommand{\RabinOT}[2]{{({#1})\textsf{-RabinOT}^{#2}}}

\newcommand*{\sbin}{\{0,1\}}

\DeclareMathOperator{\dis}{D}

\newcommand{\IP}[1]{{\textsf{IP}_{#1}}}

\newcommand{\EQ}[1]{{\textsf{EQ}_{#1}}}

\newcommand{\cHS}[2]{{\Hop(#1|#2)}}

\newcommand{\Hop}{H}
\newcommand{\HS}[2]{{\Hop(#1|#2)}}

\newcommand{\Hmaxeps}[2]{H_{0}^{#1}(#2)}
\newcommand{\Chmaxeps}[3]{H_{0}^{#1}(#2|#3)}

\newcommand{\support}{\textnormal{supp}}

\newcommand{\chh}[5]{\ensuremath{H_{#1}^{#2}({#3}|{#4})_{#5}}}


\newcommand{\chvn}[3]{\chh{}{}{#1}{#2}{#3}}
\newcommand{\proj}[1]{|#1\rangle\langle#1|}

\newcommand{\supp}[1]{\textnormal{supp}\, ( #1 )}
\newcommand{\id}{{\mathbbm{1}}}

\newcommand{\vecstate}[1]{\ket{#1}\bra{#1}}

\newcommand{\h}{\ensuremath{\mathcal{H}}}
\newcommand{\hi}[1]{\ensuremath{\mathcal{H}_{\textnormal{#1}}}}
\newcommand{\hA}{\hi{A}}
\newcommand{\hX}{\hi{X}}
\newcommand{\hAB}{\hi{AB}}
\newcommand{\hB}{\hi{B}}

\newcommand{\hR}{\hi{R}}


\newcommand{\idx}[2]{{#1}_{#2}}

\newcommand{\rhoA}{\ensuremath{\idx{\rho}{A}}}

\newcommand{\rhoAB}{\ensuremath{\idx{\rho}{AB}}}
\newcommand{\rhoABZ}{\ensuremath{\idx{\rho}{ABZ}}}

\newcommand{\rhoABC}{\ensuremath{\idx{\rho}{ABC}}}

\newcommand{\psiAB}{\ensuremath{\idx{\psi}{AB}}}

\newcommand{\kron}{\otimes}
\newcommand{\ptrace}[2]{\ensuremath{\ptr{#1} (#2)}}
\newcommand{\idi}[1]{\ensuremath{\mathds{1}_{\textnormal{\tiny #1}}}}
\newcommand{\idA}{\idi{A}}

\newcommand{\idAB}{\idi{AB}}

\newcommand{\ptr}[1]{\operatorname{tr}_{\textnormal{#1}}}



\newcommand{\posops}[1]{\ensuremath{\mathcal{P}(#1)}}

\newcommand{\states}[1]{\ensuremath{\mathcal{S}(#1)}}

\newcounter{protoCount}
\newcounter{protoList}
\newsavebox{\tmpbox}
\newlength{\protobox}

\newcounter{prot}
\setcounter{prot}{0}

\newtheorem{condition}{Condition}{\bfseries}{}

\begin{document}
\title{Lower Bounds for Quantum Secure Function Evaluation Reductions } 


\author{%
  \IEEEauthorblockN{Esther H\"anggi}
  \IEEEauthorblockA{
                    Lucerne University of Applied Sciences and Arts\\
                    Rotkreuz\\
                    Email: esther.haenggi@hslu.ch}
  \and
  \IEEEauthorblockN{Severin Winkler}
  \IEEEauthorblockA{Ergon Informatik AG\\ 
                    Zurich\\
                    Email: severin.winkler@ergon.ch}
}

\maketitle


\begin{abstract}
One-sided output secure function evaluation is a cryptographic primitive where the two mutually distrustful players, Alice and Bob, both have a private input to a bivariate function. Bob obtains the value of the function for the given inputs, while Alice receives no output. It is known that this primitive cannot be securely implemented if the two players only have access to noiseless classical and quantum communication. In this work, we first show that Bob can extract the function values for \emph{all} his possible inputs from any implementation of a non-trivial function that is correct and preserves the privacy of Bob's input. Our result holds in the non-asymptotic setting where the players have finite resources and the error is a constant. Then we consider protocols for secure function evaluation in a setup where the two players have access to trusted distributed randomness as a resource. Building upon the first result, we prove a bound on the efficiency of such cryptographic reductions for any non-trivial function in terms of the conditional entropies of the trusted randomness. From this result, we can derive lower bounds on the number of instances of different variants of OT needed to securely implement a given function. 
\end{abstract}

\section{Introduction}
Secure multi-party computation enables two or more mistrustful parties to collaborate in order to achieve a common goal~\cite{Yao82}. Practical tasks are, for example, electronic voting or online auctions. A specific variant of multi-party computation is \emph{secure function evaluation}, where each party has a private input and the value of the function is computed correctly and securely, meaning that no party can cheat or learn more than what is implied by their own input and output. In the case where two parties, Alice and Bob, compute a bivariate function and only Bob receives the computed value of the function, the task is known as \emph{one-sided output secure function evaluation} (SFE). 

An important and well-studied variant of SFE is one-out-of-two bit-OT ($\OT{1}{2}{1}$)~\cite{EvGoLe85}. Here, the sender (Alice) has two input bits, $x_0$ and $x_1$. The receiver (Bob) gives as input a choice bit $c$ and receives $x_c$ without learning $x_{1-c}$. The sender gets no information about the choice bit $c$. This primitive can be generalized to t-out-of-n-OT ($\OT{t}{n}{k}$) where the $n$ inputs are strings of $k$ bits and the receiver can choose to receive $t<n$ of them. OT is sufficient to execute any two-party computation securely \cite{GolVai87,Kilian88}. OT can be precomputed offline~\cite{BBCS92,Beaver95}, which means that the players compute distributed randomness by using OT with random inputs and storing the inputs and outputs. The stored shared randomness can then be used as OT at a later stage, with a noiseless communication channel only.

SFE is impossible to implement with information-theoretic security, even if the players can manipulate quantum information and have access to noiseless classical and quantum communication channels~\cite{Lo97}.

Since SFE cannot be implemented from scratch, 
there has been a lot of interest in reductions of secure function evaluation to weaker resource primitives. OT --- and therefore ultimately any secure two-party computation --- can be realized from noisy channels \cite{CreKil88,CrMoWo04,DFMS04,Wullsc09}, noisy correlations~\cite{WolWul04,WolWul08,NaWi06}, or weak variants of oblivious transfer~ \cite{CreKil88,Cachin98,DaKiSa99,BrCrWo03,DFSS06,Wullsc07}. 

Given these positive results, it is natural to ask how efficient such reductions can be in principle, i.e., how many instances of a primitive are needed to securely implement a certain function. Several lower bounds for OT reductions are known. The earliest impossibility result~\cite{Beaver89a} shows that the number of $\OT{1}{2}{1}$ cannot be \emph{extended}, i.e., there does not exist a protocol using $n$ instances of $\OT{1}{2}{1}$ that perfectly implements $m > n$ instances. Lower bounds on the number of instances of OT needed to implement other variants of OT~\cite{DodMic99, Maurer99, KKK08,WolWul08, PrabhakaranP14} and SFE~\cite{WW14} are also known.

Characterizing the efficiency of reductions to resource primitives leads to a better understanding of the difficulty of the implemented task and of the \emph{cryptographic capacity} of the resource. In this spirit, the OT capacity of noisy resources has been introduced in~\cite{NW08}. It is the maximum length of string-OT per instance of the resource that can be securely implemented. Both lower bounds~\cite{NW08,Ahlswede2013, RP14} and upper bounds~\cite{Ahlswede2013,SudaWY24} on the OT capacity of noisy resources are known.

All these bounds on the efficiency of OT and general SFE reductions only hold for classical protocols. In this work, we, therefore, study the efficiency of such reductions in the setting where the two players can manipulate and exchange quantum systems.

\subsection{Previous Results}

The authors of~\cite{SaScSo09} considered quantum protocols where the players have access to resource primitives, including different variants of OT. They proved that important lower bounds for classical protocols also apply to \emph{perfectly} secure quantum reductions. In~\cite{WW10} it has been demonstrated that \emph{statistically} secure protocols can violate these bounds by an arbitrarily large factor. More precisely, there exists a quantum protocol that reverses string OT much more efficiently than any classical protocol~\cite{Ahlswede2013,Maurer99,DodMic99,WolWul08,WW10,PrabhakaranP14,HimWat15}. While this proves that quantum reductions are more powerful than classical protocols, a weaker lower bound still holds for statistically secure quantum protocols for $\OT{1}{2}{k}$~\cite{WW10}. This bound enables a generalization of the result from~\cite{Beaver96}, showing that OT cannot be extended by quantum protocols. Using the equivalence of $\OT{1}{2}{k}$ and commitments in the quantum setting~\cite{BBCS92,Yao95,DFLSS09}, this also implies restrictions on protocols extending commitments. The general impossibility of extending bit commitments has been shown later in~\cite{WTHR11}. To the best of our knowledge, the only lower bounds for SFE in the statistical case apply only to $\OT{1}{2}{k}$ and cannot easily be adapted to general SFE.

\subsection{Our Results}
We consider statistically secure quantum protocols that compute a function between two parties. We show that for any non-trivial function and without any cryptographic resource primitive, such protocols are completely insecure because Bob can compute the function values for all his possible inputs. We give a quantitative bound in the non-asymptotic setting for the success probability of such an attack (Theorem~\ref{thm:impossibility-sfe}). 

For the main result of this work, we consider implementations of SFE from a shared cryptographic resource, represented by trusted randomness distributed to the players. Building upon the first result, we provide a bound on the efficiency of such reductions — in terms of the conditional entropies of the randomness. This allows us to derive bounds on the minimal number of OTs needed to compute a function securely. Our results hold in the non-asymptotic regime, i.e., we consider a finite number of resource primitives and our results hold for any error.

Our lower bounds implies the following results:
\begin{itemize}
\item Quantum protocols cannot extend $\OT{1}{n}{k}$. There exists a constant $c_n > 0$ such that any quantum reduction of $m + 1$ instances of $\OT{1}{n}{1}$ to $m$ instances of $\OT{1}{n}{1}$ must have an error of at least $\frac{c_n}{m}$ (Corollary~\ref{cor:extend-ot}). This result was previously only known for $\OT{1}{2}{1}$.
\item Implementing $\OT{1}{n}{k}$ from $\OT{1}{2}{k}$ needs approximately $n-1$ instances for large $k$ (Corollary~\ref{cor:extend-choices-ot}). Previously, only a lower bound of $1$ instance was known. There is a classical protocol~\cite{DodMic99} that achieves this bound and which is also secure against quantum adversaries\footnote{Using the quantum lifting theorem in~\cite{Unruh10} in the quantum UC model, the security against quantum adversaries is straightforward, but somewhat tedious to show.}. By our result this protocol is essentially optimal also in the quantum case. 
\item Furthermore, we derive bounds on the efficiency of secure implementations of well-known functions such as the inner-product-modulo-two function (Corollary~\ref{cor:reductions:inner-product}) and the equality function (Corollary~\ref{cor:reductions:equality}). We are not aware of any existing quantum bound for these functions. While quantum protocols for SFE can be arbitrarily more efficient in general~\cite{WW14}, these bounds show that they can only be slightly more efficient than optimal classical protocols in these two cases. 
\end{itemize}

\subsection{Outline}
The rest of this paper is organized as follows. In Section~\ref{sec:prel}, we introduce the notations and mathematical tools. In Section~\ref{sec:sec-definitions-sfe}, we define the cryptographic functionality of SFE and necessary conditions that any statistically secure implementation of SFE must fulfill. Then, in Section~\ref{sec:imp_sfe}, we state our first result showing the impossibility of implementing any non-trivial function with statistical security. Our main result, a bound on the efficiency of reductions of SFE to trusted distributions, is presented in Section~\ref{sec:reductions-of-sfe}. 

\section{Preliminaries}\label{sec:prel}

The distribution of a random variable $X$ is denoted by $P_X(x)$. Given the distribution $P_{XY}$ over $\mX \times \mY$, the marginal distribution is $P_{X}(x) := \sum_{y \in \mY} P_{XY}(x,y)$.

Quantum states $\states{\h}$ are represented by positive semi-definite operators of trace one acting on a Hilbert space $\h$, $\states{\h} := \{ \rho \in \posops{\h}:\h\rightarrow \h | \tr\,\rho = 1 \}$. We consider finite-dimensional Hilbert spaces. 
A state $\rho \in \states{\h}$ is \emph{pure} if it has rank one and \emph{mixed} otherwise. A pure state $\rho$ can be represented by an element $\ket{\psi}$ of the Hilbert space $\h$, with $\rho = \proj{\psi}$ the outer product. 
We use indices to denote multipartite Hilbert spaces, i.e., $\hAB:=\hA\kron\hB$. Given a quantum state $\rhoAB \in \states{\hA\kron\hB}$ we denote by $\rhoA$ its marginal states $\rhoA=\ptrace{B}{\rhoAB}$.

A (classical) probability distribution $P_X$ of a random variable $X$ over $\mX$ can be represented by a quantum state $\rho_X=\sum_{x\in\mX}P_X (x) \vecstate{x}$, using 
an orthonormal basis $\{\ket{x} \mid x\in\mX\}$ of the Hilbert space $\hi{$\mX$}$. The uniform distribution on $\mX$ corresponds to $\tau_{\mX}:=\frac{1}{|\mX|}\sum_{x\in\mX}\vecstate{x}$.
Quantum information about a classical random variable $X$ can be represented by a classical-quantum state or \emph{cq-state} $\rho_{XB}$ on $\hi{$\mX$}\kron\hi{B}$ of the form $\rho_{XB}=\sum_{x \in\mX}P_X(x)\vecstate{x}\kron\rho_B^x$. 

A mixed state can be seen as part of a pure state on a larger Hilbert space (see for example \cite{NieChu00}). More precisely, given $\rho_A$ on $\hA$, there exists a pure density operator $\rho_{AB}$ on a
joint system $\hA \kron \hB$ such that $\rho_A = \ptrace{B}{\rho_{AB}}$. $\rho_{AB}$ is called a \emph{purification} of $\rho_A$. For a state representing a classical distribution $\rho_X=\sum_{x\in\mX}P_X (x) \vecstate{x}$, one explicit purification is $\ket{\psi}_{XX'}=\sum_{x\in\mX} \sqrt{P_X (x)} \ket{x}_{X}\kron\ket{x}_{X'}$.

The trace distance measures the probability that two quantum states can be distinguished.
\begin{definition}[Trace Distance]
The \emph{trace-distance} between two quantum states $\rho$ and $\sigma$ is 
\begin{align}
D(\rho,\sigma)&= \frac{1}{2}\| \rho-\sigma \|_1\;,
\end{align} 
where $\|A\|_1=\tr \sqrt{A^\dagger A}$. Two states $\rho$ and $\sigma$ with trace-distance at most $\delta$ are called \emph{$\delta$-close} and we write $\rho \approx_{\delta} \sigma$. 
\end{definition}

Any transformation $\mE$ of a quantum system can be represented by a trace-preserving completely positive map (TP-CPM). According to~\cite{Stine55}, any TP-CPM has a Stinespring dilation, i.e., there exists a Hilbert space $\hR$, a unitary $U$ acting on $\h_{AXA'R}$ and a pure state $\sigma_{XA'R}\in \states{\h_{XA'R}}$ with $\mE(\rhoA)=\ptrace{AR}{U(\rhoA \otimes \sigma_{XA'R}) U^\dagger}$.

\subsection{Entropies}
We use the notation $h(p)=-p\log(p)-(1-p)\log(1-p)$
for the binary entropy function.

\begin{definition}
The \emph{conditional von Neumann entropy} of a quantum system $A$ given another quantum system $B$ described by a joint state $\rho_{AB}$ is 
\[ \chvn{A}{B}{\rho} := H(\rho_{AB}) - H(\rho_{B})\;,\]
where $H(\rho) := \tr( - \rho \log(\rho))$.
\end{definition}

For classical states defined by a probability distribution, we use the \emph{conditional max-entropy}. This entropy corresponds to the R\'enyi entropy \cite{Renyi61} of order 0. $\Hmaxeps{}{X}$ is defined as the logarithm of the size of the support of $X$. In the conditional case, we maximize the conditional distribution over all values of the conditioning random variable.
\begin{definition}[Max-Entropy]\label{def:class-max-entropy}
For random variables $X,Y$ with distribution $P_{XY}$ over $\mX \times \mY$, the conditional max-entropy is
\begin{align*}
 \Chmaxeps{}{X}{Y}&:=\max_{y \in \mY}\log|\supp{P_{X|Y=y}}|\;.
\end{align*}
\end{definition}

\section{Definitions of Secure Function Evaluation}\label{sec:sec-definitions-sfe}
In this section, we define the security of one-sided output secure function evaluation in the malicious model and derive necessary security conditions from this definition. First, we define the ideal functionality.
\begin{definition}[Ideal SFE]\label{def:ideal-sfe}
An \emph{ideal one-sided output secure function evaluation (SFE)} takes an input $x\in \mX$ on Alice's side and an input $y\in \mathcal{Y}$ on Bob's side. It outputs the function value $f(x,y)$ on Bob's side and nothing on Alice's side. (See Figure~\ref{fig:ideal-sfe}) 
\end{definition}
\begin{figure}[htbp]
   \centering
   \includegraphics[width=0.5\textwidth]{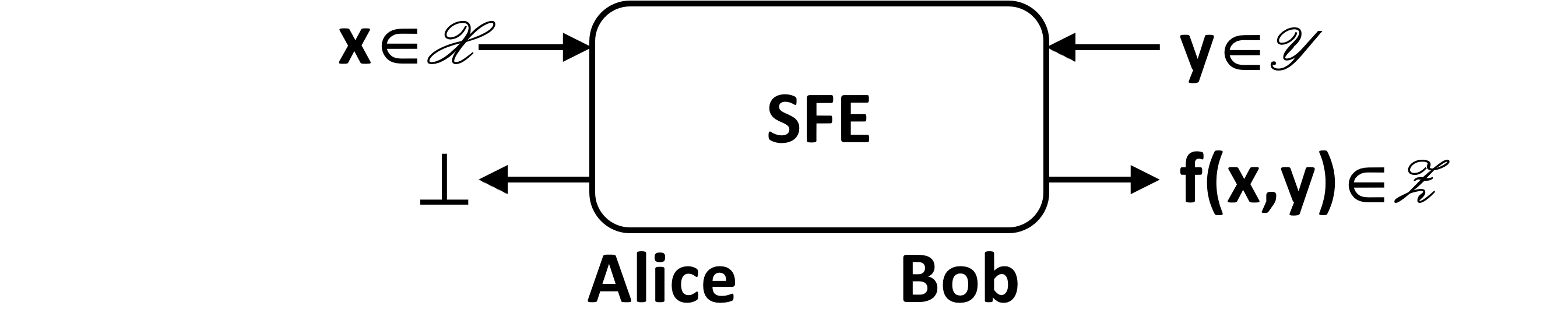}
   \caption{Ideal (one-sided output) SFE.}
  \label{fig:ideal-sfe}
 \end{figure}

Intuitively, an implementation of such a functionality is considered secure if a malicious player can learn nothing beyond what they could derive from their input and output in the ideal setting \cite{GoMiRa85,GoMiWi87,MicRog91,Beaver91}. This requirement is formally captured by the following definition.
\begin{definition}[Statistical Security of SFE]A protocol is an $\eps$-secure implementation of a function $f:\mX \times \mY \rightarrow \mZ$ in the malicious model if for any adversary $\mA$ attacking the protocol (real setting), there exists a \emph{simulator} $\mS$ using the ideal function (ideal setting) such that for all inputs of the honest players the real and the ideal setting can be distinguished with an advantage of at most $\eps$.
\end{definition}
This definition implies the following security conditions. These conditions are not sufficient to prove the security of a protocol, but any secure protocol must necessarily fulfill these conditions~\cite{FS09} (see also~\cite{WW14}). Since we are interested in the impossibility of protocols, this only strengthens our results.

\begin{condition}\label{con:secure-sfe}
An $\eps$-secure SFE protocol must fulfill the following conditions.
\begin{enumerate}[label=(\alph*)]
 \item \label{item:secure-sfe-corr} Correctness: If both players are honest, Alice has input $x \in \mX$ and Bob has input $y \in \mY$, then Bob always receives $f(x,y)$ in the ideal setting. This implies that in an $\eps$-secure protocol, Bob must output a value~$Z$, where
\begin{equation} \label{eq:corr:sfe}
 \Pr[Z = f(x,y)] \geq 1- \eps\;.
\end{equation}

\item \label{item:secure-sfe-security-alice} Security for Alice: Let honest Alice choose her input $X \in \mX$ uniformly at random. Let Bob be malicious. In the ideal setting, the simulator must provide the ideal functionality with a classical input $Y' \in \mY$. It receives the output $Z \in \mZ$ and then outputs a quantum state $\sigma_B$ that may depend on $Y'$ and $Z$. The output of the simulator together with the classical values $X$, $Y'$ and $Z$ defines a state $\sigma_{X B Y'Z}$. This ideal state $\sigma_{XBY'Z}$ must be such that the distribution over all $x$ which are compatible with a given input and output must be uniform with respect to $B$, i.e.,
\begin{align} \label{eq:secA1:sfe}
\sigma_{X B Y'Z}=\sum_{y,z}P_{Y'Z}(y,z)\vecstate{y,z}\kron \sigma^{y,z}_{XB}
\end{align}
with $\sigma^{y,z}_{XB}=\tau_{\{x\in \mX :~f(x,y) = z\}}\kron \sigma^{y,z}_{B}$
and
\begin{align} \label{eq:secA2:sfe}
\dis(\sigma_{XB},\rho_{XB}) \leq \eps\;,
\end{align}
where $\rho_{XB}$ is the resulting state of the real protocol.

\item \label{item:secure-sfe-security-bob} Security for Bob: If Bob is honest and Alice malicious, the simulator outputs a quantum state $\sigma_A$ that is independent of Bob's input $y$. Let $\rho_{A}^y$ be the state that Alice has at the end of the protocol if Bob's input is $y$. The security definition now requires that $\dis(\sigma_A,\rho_{A}^y)\leq \eps$ for all $y \in \mY$. By the triangle inequality, we obtain for any $y,y' \in \mY$
\begin{equation} \label{eq:secB:sfe}
\dis(\rho_{A}^y, \rho_A^{y'}) \leq 2\eps\;.
\end{equation}

\end{enumerate}
\end{condition}

\section{Impossibility of Secure Function Evaluation}\label{sec:imp_sfe}
Our analysis adopts the standard model of quantum two-party computation~\cite{Mayers97,LoChau97,Lo97} (see also~\cite{WTHR11}), where the two players have access to a noiseless quantum and a noiseless classical channel. In each round of the protocol, one party may perform, conditioned on the available classical information, an arbitrary quantum operation on the system in their possession. This operation also generates the input for the available communication channels. By introducing ancillary spaces, the quantum operations of both parties can be purified. If the initial state of the protocol is pure, we can therefore assume that the state at the end of the protocol is pure, conditioned on all classical communication.

The intuition of Lo's original impossibility result for SFE~\cite{Lo97} is that Bob can extract the function values for more than one of his inputs from any implementation of a function that is correct and preserves the privacy of Bob's input. In~\cite{HänggiWinkler2025} is strengthened and made quantitative: We need the following proposition from~\cite{HänggiWinkler2025} providing an explicit upper bound on the probability that Bob can correctly compute the function value for $2 \leq m \leq |\mY|$ of his inputs. For ease of understanding and to make this work self-contained, a proof tailored to our specific setting (Condition 1) is given in Appendix~\ref{app:proof-proposition}.
\begin{restatable}{proposition}{extractionsfe}\label{prop:strong-imposs}
Let $f:\mX \times \mY \rightarrow \mZ$ be a function. For any protocol implementing the function that is correct with probability $1-\eps$ and is $\eps$-secure for Bob in the malicious model, there is an attack that allows a dishonest Bob to compute the values of the function for $m$ of his possible inputs with probability at least $1-2m^2\sqrt{\eps}$.
\end{restatable}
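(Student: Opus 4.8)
The strategy follows Lo's original impossibility argument, made quantitative through a hybrid/telescoping argument over Bob's inputs. The plan is to exploit the security for Bob, condition~\ref{item:secure-sfe-security-bob}, which says that Alice's reduced state $\rho_A^y$ at the end of an honest execution is essentially independent of Bob's input: $\dis(\rho_A^y,\rho_A^{y'})\le 2\eps$ for all $y,y'\in\mY$. By purifying the protocol (as discussed in Section~\ref{sec:imp_sfe}), for each input $y$ the joint final state $\ket{\psi^y}_{AB}$ is a purification of $\rho_A^y$. Since all these states have $2\eps$-close $A$-marginals, a version of Uhlmann's theorem (in its ``gentle'' / trace-distance form) gives, for each $y$, a local unitary $V^y$ acting only on Bob's system such that $(\id_A\otimes V^y)\ket{\psi^{y_0}}_{AB}$ is $O(\sqrt{\eps})$-close to $\ket{\psi^y}_{AB}$, for a fixed reference input $y_0$. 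In other words, starting from a single honest run with input $y_0$, Bob can locally rotate his system to (approximately) obtain the joint state corresponding to any other input $y$.

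The attack is then immediate: dishonest Bob runs the protocol honestly with input $y_0$, obtains $\ket{\psi^{y_0}}_{AB}$ restricted to his registers, and then for each target input $y_i$ ($i=1,\dots,m$) applies $V^{y_i}$, measures to extract the output guess $Z_i$, and (crucially) \emph{uncomputes} / reverts via $(V^{y_i})^\dagger$ before moving to the next input. Each individual extraction succeeds with probability at least $1-\eps$ by correctness~\ref{item:secure-sfe-security-bob}\ref{item:secure-sfe-corr} when applied to the true state $\ket{\psi^{y_i}}$; the substitution of the rotated state for the true one costs an additional $O(\sqrt{\eps})$ in trace distance per input. Summing these errors over the $m$ inputs — and accounting for the accumulated disturbance from doing this sequentially on one register — yields a total failure probability of the claimed order $2m^2\sqrt{\eps}$; the $m^2$ (rather than $m$) reflects that the $i$-th extraction must tolerate the disturbance caused by the previous $i-1$ rounds, so the per-round errors compound roughly quadratically.

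Concretely the steps are: (1) purify the protocol and record that for every $y$, $\dis(\rho_A^y,\rho_A^{y_0})\le 2\eps$; (2) invoke Uhlmann to get Bob-local unitaries $V^y$ with $\dis\!\big(\proj{\psi^y}_{AB},\,(\id_A\otimes V^y)\proj{\psi^{y_0}}_{AB}(\id_A\otimes V^y)^\dagger\big)=O(\sqrt{\eps})$; (3) describe the sequential extract-and-uncompute attack on a single honest $y_0$-run; (4) bound the error of the $i$-th step by triangle inequality, combining the correctness error $\eps$, the Uhlmann approximation $O(\sqrt\eps)$, and the residual disturbance $\sum_{j<i}O(\sqrt\eps)$ from earlier rounds; (5) sum over $i=1,\dots,m$ and simplify constants to obtain $1-2m^2\sqrt{\eps}$.

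\textbf{Main obstacle.} The delicate point is step~(4): controlling how errors \emph{accumulate} across the sequential rounds on Bob's single physical register. Because Bob cannot re-run the protocol, each uncomputation $(V^{y_j})^\dagger$ is only approximately the inverse of the idealized rotation, so residual errors from earlier rounds perturb the state before the $i$-th extraction; one must argue (via the contractivity of trace distance under quantum operations, i.e. that applying $V^{y_i}$, measuring, and applying $(V^{y_i})^\dagger$ cannot increase trace distance from the reference state) that these perturbations add at most linearly, giving the $\sum_{j}$ that integrates to the $m^2$ factor. Getting the constant exactly $2$ (rather than some larger absolute constant) requires being careful about whether the Uhlmann step costs $\sqrt{\dis}\le\sqrt{2\eps}$ or a tighter bound, and about converting between fidelity and trace distance; this is routine but is where a loose argument would lose the stated constant.
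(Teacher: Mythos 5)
Your overall architecture matches the paper's proof: purify the protocol, use security for Bob plus an Uhlmann-type argument to obtain Bob-local unitaries relating the final states for different inputs, run the extractions sequentially on one register, and let the per-round $O(\sqrt{\eps})$ errors accumulate linearly so that a union bound over $m$ rounds gives the $O(m^{2}\sqrt{\eps})$ failure probability. (The paper rotates \emph{forward} from $\rho_{AB}^{y_{i}}$ to $\rho_{AB}^{y_{i+1}}$ rather than rotating back to a fixed reference state after each extraction, but that difference is cosmetic.)

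There is, however, a genuine gap at the central step (4). You claim that the round ``apply $V^{y_i}$, measure, apply $(V^{y_i})^\dagger$'' cannot increase the trace distance to the reference state, by contractivity of the trace distance. Contractivity only gives $\dis(\mE(\rho),\mE(\sigma))\le \dis(\rho,\sigma)$ when the \emph{same} channel is applied to both states; it says nothing about $\dis(\mE(\sigma),\sigma)$, i.e.\ about how much the extraction measurement itself disturbs the (ideal) state. Moreover, uncomputing the unitary does not undo the measurement disturbance, so the ``crucially uncompute'' step is not what saves you. Without a bound on the measurement disturbance, a single extraction could in principle destroy the state and all subsequent rounds would fail, so your per-round error bound does not follow. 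The missing ingredient is exactly what the paper supplies via its Gentle Measurement Lemma for cq-states (Lemma~\ref{lem:gentle-measurement-cq}): because the protocol is $\eps$-correct, the measurement of $f(X,y_i)$ has a nearly deterministic outcome conditioned on Alice's input $X$, hence the post-measurement state is $(\sqrt{\eps}+\eps)$-close to the pre-measurement state; combined with the Uhlmann step (Lemma~\ref{lem:classical-attack}, which also handles the conditioning on classical communication, another point your sketch glosses over) this gives the $3\sqrt{\eps}+\eps$ per-round cost that the paper then sums to $2m^{2}\sqrt{\eps}$. Once you replace your contractivity claim by this gentle-measurement argument, your plan becomes essentially the paper's proof.
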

Our first result shows the impossibility to securely implement any \emph{non-trivial} function: A function that has an input $\bar y \in \mY$ that reveals the whole input $x$ of Alice can be securely implemented by a trivial protocol, where Alice sends her input $x$ to Bob, who computes the output $f(x,y)$ from the received value. Since the simulator in the malicious model can always give the input $\bar y$ to the ideal functionality, there is a simulation for every strategy of dishonest Bob. Therefore, we can w.l.o.g.\ consider only non-trivial functions.

\begin{definition}[Non-trivial Functions]
A function $f:\mX \times \mY \rightarrow \mZ$ is \emph{non-trivial} or \emph{partially concealing} if and only if
\begin{align} \label{eq:condition-partially-concealing}
\forall y \in \mY\;  \exists x \neq x' \in \mX: \; f(x,y) = f(x',y)\;.
\end{align}
\end{definition}

\begin{restatable}{theorem}{impossibilitysfe}\label{thm:impossibility-sfe}
For any non-trivial function $f:\mX \times \mY \rightarrow \mZ$, there exists a constant $c_f > 0$ such that no $\eps$-secure quantum protocol in the malicious model exists for $\eps \leq c_f$.
\end{restatable}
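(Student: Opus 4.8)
The plan is to combine the quantitative extraction attack of Proposition~\ref{prop:strong-imposs} with the privacy requirement for Alice (Condition~\ref{con:secure-sfe}\ref{item:secure-sfe-security-alice}) and derive a contradiction for sufficiently small $\eps$. First I would apply Proposition~\ref{prop:strong-imposs} with $m = |\mY|$ (which is legitimate since $2 \le |\mY|$ whenever $f$ is non-trivial and non-degenerate; if $|\mY| = 1$ the non-triviality condition still forces two inputs $x \ne x'$ of Alice colliding, and one argues directly). This yields a dishonest-Bob attack that outputs, with probability at least $1 - 2|\mY|^2\sqrt{\eps}$, a full table $(Z_y)_{y\in\mY}$ with $Z_y = f(X, y)$ for \emph{every} $y\in\mY$ simultaneously, where $X$ is Alice's (say uniformly random) input.

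Next I would confront this with the security-for-Alice condition. In the ideal world, the simulator submits a single input $Y' \in \mY$ to the ideal functionality, learns only $Z = f(X, Y')$, and must produce a state $\sigma_{XB}$ that is $\eps$-close to the real state $\rho_{XB}$. By~\eqref{eq:secA1:sfe}, conditioned on $(Y', Z)$ the simulator's knowledge of $X$ is \emph{uniform} over the coset $\{x : f(x, Y') = z\}$; in particular, from its view it cannot predict $f(X, y)$ for any other input $y$ better than guessing within that coset. By non-triviality, for each value of $Y'$ there exist $x \ne x'$ with $f(x, Y') = f(x', Y')$, so the coset has size at least two and there is a residual uncertainty about $X$ that no choice of $Y'$ can remove — hence some target input $y^\ast$ (possibly depending on $Y'$) on which the ideal simulator's success probability of outputting $f(X, y^\ast)$ is bounded away from $1$ by an absolute, $f$-dependent gap $\delta_f > 0$ (e.g.\ $\delta_f = 1/(|\mX|\,|\mY|)$ suffices after averaging). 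The key step is to translate the ``full-table'' attack into an attack on the ideal world via the simulator: running the extraction attack in the real world produces all $Z_y$ correctly with probability $\ge 1 - 2|\mY|^2\sqrt{\eps}$, and since real and ideal states are $\eps$-close for the corresponding adversary, the same reading applied in the ideal world succeeds with probability $\ge 1 - 2|\mY|^2\sqrt{\eps} - \eps$; but this contradicts the $\delta_f$ upper bound on ideal-world prediction once $2|\mY|^2\sqrt{\eps} + \eps < \delta_f$, i.e.\ for $\eps$ below a constant $c_f$ depending only on $f$.

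The main obstacle is the second step: one must be careful that ``Bob runs the extraction attack, then Alice's simulator is invoked'' is a coherent composition — the adversary in the security-for-Alice game is precisely the attacker from Proposition~\ref{prop:strong-imposs} augmented with the final output of the whole table, and the simulator $\mS$ guaranteed by $\eps$-security must mimic \emph{that} adversary's output distribution to within $\eps$. So I would phrase the extraction attack's output register as part of the register $B$ appearing in~\eqref{eq:secA2:sfe}, apply the closeness bound~\eqref{eq:secA2:sfe} to this specific adversary, and then bound the probability that $\sigma_{XB}$ — which factorizes as in~\eqref{eq:secA1:sfe} — assigns the correct value to $f(X, y^\ast)$; non-triviality is exactly what guarantees the gap $\delta_f$. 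Setting $c_f := (\delta_f / (2|\mY|^2 + 1))^2$ (or any cleaner explicit constant obtained by optimizing the two error terms) completes the argument.
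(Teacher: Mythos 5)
Your proposal follows essentially the same route as the paper's proof: it combines the extraction attack of Proposition~\ref{prop:strong-imposs} with the structure of the ideal state in~\eqref{eq:secA1:sfe} and the closeness bound~\eqref{eq:secA2:sfe}, uses non-triviality to obtain a constant gap, and solves for $\eps$. The only (cosmetic) difference is that the paper bounds the probability of guessing Alice's input $X$ itself from the ideal state, giving the gap $1-\tfrac{1}{|\mX|}$, whereas you bound the prediction of $f(X,y^\ast)$ for a suitable $y^\ast$; both yield a constant $c_f$ of the same nature.
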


\section{Reductions of Secure Function Evaluation}\label{sec:reductions-of-sfe}
We now move to the lower bounds on the minimal number of resources necessary when implementing SFE. We first define the setup of the considered SFE reductions.  To model the resource, the two players are given access to trusted distributed randomness $P_{UV}$. Alice receives $U$ and Bob receives $V$ at the beginning of the protocol. 

Let $X$ and $Y$ be two random variables. The \emph{minimum sufficient statistics} of $X$ with respect to $Y$ is defined as the random variable $\K{X}{Y}:=P_{Y|X=x}$. Since $X$ and $Y$ are independent given $\K{X}{Y}$ (see for example \cite{FiWoWu04}), any protocol with access to a primitive $P_{UV}$ can be transformed into a protocol with access to $P_{\K{U}{V},\K{V}{U}}$ (without compromising the security) because the players can privately compute $P_{UV}$ from $P_{\K{U}{V},\K{V}{U}}$. Thus, we restrict in the following to primitives $P_{UV}$ where $U=\K{U}{V}$ and $V=\K{V}{U}$. 

The classical primitive $P_{UV}$ can be modeled by the quantum primitive \[ \ket{\psi}_{UVE}=\sum_{u,v} \sqrt{P_{UV}(u,v)} \cdot \ket{u,v}_{UV} \otimes \ket{u,v}_E\] that distributes the values $u$ and $v$ to the two players and keeps the environment system~$E$ which holds a purification of the state. This is sufficient to model reductions to $\OT{t}{n}{k}$ and $\RabinOT{p}{k}$ since these primitives are equivalent to trusted randomness according to a certain distribution $P_{UV}$. More precisely, for each of these primitives there exist two protocols: one that generates the distributed randomness using \emph{one} instance of the primitive, and one that implements \emph{one} instance of the primitive using the distributed randomness as input to the two parties~ \cite{BBCS92,Beaver95}. The randomized primitives are obtained by simply choosing all inputs uniformly at random. The corresponding primitive $P_{UV}$, which is also known as an \emph{oblivious key}, is given by the distribution 
\begin{align*}
P_{UV}((x_0,\ldots,x_{n-1}),(c,y))=\begin{cases}
  \frac{1}{n2^{nk}}, & \text{if } y=x_c\;\\
  0, & \text{otherwise }\;,
\end{cases}
\end{align*}
where $(x_0,\ldots,x_{n-1})\in (\sbin^k)^n$ and $(c,y)\in [n]\times \sbin^k$. For a more detailed introduction, we refer to \cite{WW10,winkler:diss}.

By introducing ancillary spaces, the quantum operations of both parties can again be purified. Thus, we can assume that the state at the end of the protocol \emph{including the environment $E$} is pure, conditioned on all classical communication.

\subsection{Lower Bound for Secure Function Evaluation}
We consider secure implementations of functions $f:\mX \times \mY \rightarrow \mZ$. We call an input $x \in \mX$ \emph{redundant} if there exists $x' \neq x \in \mX$ such that $f(x,y) = f(x',y)$ for all $y \in \mY$. We define a restriction $f':\mX\setminus\{x\} \times \mY \rightarrow \mZ$ of $f$ with $f'(x)=f(x)$ for all $x \in \mX\setminus\{x\}$ by removing the redundant input $x$. A secure implementation of $f$ obviously implies a secure implementation of $f'$. The impossibility of securely implementing $f'$, therefore, implies the impossibility to securely implement $f$. We can thus w.l.o.g. consider \emph{non-redundant} functions.
\begin{definition}[Non-redundant Functions]
A function $f:\mX \times \mY \rightarrow \mZ$ is non-redundant if and only if
\begin{align} \label{eq:condition-sfe}
\forall x\neq x' \in \mX\;  \exists y \in \mY: \; f(x,y) \neq f(x',y)\;.
\end{align}
\end{definition}
For all non-redundant functions, it is possible to compute $x$ from the set $\{(f(x,y),y):~y\in\mY\}$ for all $x \in \mX$, i.e., from the set of all function values for a given input of Bob.   

Before stating our main result, we give an outline of its proof: If the implementation of the function is \emph{perfectly} secure for Alice, Bob only learns the function value for his selected input. The fact that the function partially conceals Alice's input, implies a lower bound on the entropy of her input conditioned on Bob's system. 

To derive a bound on the conditional entropies of the initial resource state $\ket{\psi}_{ABE}$, we compare the original protocol to a modified protocol, with the only difference that Bob additionally receives the state of the environment $\rho_E$. This leads to the state between Alice and Bob being pure and corresponds to the case where Alice and Bob implement the function from scratch. In the modified protocol, Bob can compute the function values for all his inputs using Proposition~\ref{prop:strong-imposs}. For any non-trivial function, Bob can, therefore, completely determine Alice's input, and the entropy of her input is \emph{zero} conditioned on Bob's system.

The difference between the high entropy in the secure protocol and the low (zero) entropy in the modified protocol must come from the only difference between the two: Bob's access to the environment state $\rho_E$. However, the dimension of $\rho_E$ limits the possible decrease in entropy. Since the environment $\rho_E$ is a purification of the initial resource $P_{UV}$, this also leads to a lower bound on the size of this trusted distribution and, therefore, for example, the number of OTs it represents.

To extend this reasoning to statistically secure protocols, we take the errors into account and apply known continuity bounds to the relevant conditional entropies.

For any non-redundant function, we can derive the following lower bound for the conditional max-entropies of a resource primitive $P_{UV}$ used to securely implement the function. The proof that follows the above intuition is given in Appendix~\ref{app:main-result}.

\begin{restatable}{theorem}{sfereductions}\label{thm:sfe-reductions}
In any protocol that implements a non-redundant function $f:\mX \times \mY \rightarrow \mZ$ from a primitive $P_{UV}$ with an error $\eps$ in the malicious model, it must hold that
\begin{equation*}
    \Chmaxeps{}{U}{V}+\Chmaxeps{}{V}{U}\geq t - (\eps +\eps') \cdot \log|\mX| - g(\eps)-h(\eps') \;,
\end{equation*}
where $t:=\min_{y \in \mY} \cHS{X}{f(X,y)}$, $g: x \mapsto (1+x)\cdot h\left(\frac{x}{1+x}\right)$ and $\eps':=2|\mY|^2\sqrt{\eps}$.
\end{restatable}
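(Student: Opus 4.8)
### Proof Proposal

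The plan is to follow the three-part intuition sketched in the paper: (i) use perfect security for Alice to show that in a perfectly secure protocol the conditional entropy $\chvn{X}{B'}{\rho}$ (where $B'$ is Bob's quantum register at the end) is at least $t$; (ii) use Proposition~\ref{prop:strong-imposs} applied to a \emph{modified} protocol in which Bob is also handed the purifying environment $E$, concluding that in that modified run Bob can recover all function values, hence all of $X$, so $\chvn{X}{B'E}{\rho}$ is essentially zero; (iii) bound the gap between these two quantities by the number of qubits in $E$, which is controlled by $\Chmaxeps{}{U}{V}+\Chmaxeps{}{V}{U}$; then (iv) re-run the whole argument with the $\eps$-errors carried along, invoking continuity (Fannes/Alicki–Fannes type) bounds for the von Neumann entropy. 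I would organize the write-up so that the error-free skeleton is visible and the $\eps$-terms are collected at the end.

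\textbf{Step 1 (lower bound on Bob's uncertainty about $X$ in the secure protocol).} Fix the honest run with $X$ uniform on $\mX$ and honest Bob using a fixed input, say the $y$ achieving the minimum in the definition of $t$. Let $\rho_{XB'}$ be the real end-of-protocol state, where $B'$ comprises Bob's registers including his output register $Z$. Security for Alice (Condition~\ref{con:secure-sfe}\ref{item:secure-sfe-security-alice}) gives a simulator state $\sigma_{XBY'Z}$ with $\sigma^{y,z}_{XB}=\tau_{\{x: f(x,y)=z\}}\otimes\sigma^{y,z}_B$ and $\dis(\sigma_{XB},\rho_{XB})\le\eps$. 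On the ideal state, $X$ conditioned on $(Y',Z)$ is uniform over the preimage $f^{-1}(z)$ for the input $y=Y'$, so $H(X\mid BY'Z)_\sigma=\sum_{y,z}P_{Y'Z}(y,z)\log|\{x:f(x,y)=z\}|$. Because honest Alice chose $X$ uniform and we may take honest Bob's input to be the minimizer $y^\star$, this equals $\cHS{X}{f(X,y^\star)}=t$ (up to the usual care that $Y'$ on the ideal side need not be $y^\star$; since we only need a \emph{lower} bound and honest Bob is fixed, I would instead argue directly that honest Bob's view $B'$ satisfies $H(X\mid B')_\rho\ge t$ by comparing to the ideal run where Bob learns only $f(X,y^\star)$). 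The upshot is $H(X\mid B')_\rho \ge t - \delta_1(\eps)$ for a continuity term $\delta_1$ that I would later fold into $(\eps+\eps')\log|\mX|+g(\eps)+h(\eps')$.

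\textbf{Step 2 (Bob learns everything in the augmented protocol).} Now consider the protocol where the environment register $E$ of $\ket{\psi}_{UVE}$ is given to Bob instead of being kept by nature. Since $\ket\psi_{UVE}$ purifies $P_{UV}$ and all operations can be purified, the joint Alice–Bob state at the end is \emph{pure}; in particular the augmented protocol is an implementation \emph{from scratch} of $f$ that is still correct with probability $1-\eps$ and still $\eps$-secure for Bob (handing Bob more information cannot hurt Bob's security, and correctness for honest players is unchanged). Apply Proposition~\ref{prop:strong-imposs} with $m=|\mY|$: dishonest Bob can compute $f(X,y)$ for all $y\in\mY$ simultaneously with probability $\ge 1-2|\mY|^2\sqrt\eps = 1-\eps'$. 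Since $f$ is non-redundant, the full tuple $(f(X,y))_{y\in\mY}$ determines $X$, so Bob's register $B'E$ (together with a measurement) yields a guess $\hat X$ with $\Pr[\hat X=X]\ge 1-\eps'$. By Fano's inequality (or the data-processing form of the continuity bound), $H(X\mid B'E)_\rho \le \eps'\log|\mX| + h(\eps')$.

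\textbf{Step 3 (the gap is paid for by $\dim E$) and assembling.} Combine Steps 1 and 2 with the dimension bound for conditional von Neumann entropy, $H(X\mid B')_\rho \le H(X\mid B'E)_\rho + \log\dim\mathcal H_E$ (from $H(X\mid B'E)\ge H(X\mid B') - \log\dim\mathcal H_E$, i.e.\ subadditivity of entropy under discarding $E$). This yields
\begin{equation*}
t - \delta_1(\eps) \;\le\; H(X\mid B')_\rho \;\le\; \eps'\log|\mX| + h(\eps') + \log\dim\mathcal H_E .
\end{equation*}
It remains to bound $\log\dim\mathcal H_E$. The canonical purification $\ket\psi_{UVE}$ has $E$ of dimension at most $|\mathrm{supp}\,P_{UV}|$, and more sharply one can take $\mathcal H_E\cong \mathcal H_U\otimes\mathcal H_V$ with the Schmidt structure giving $\log\dim\mathcal H_E \le \Chmaxeps{}{U}{V}+\Chmaxeps{}{V}{U}$ — indeed, conditioned on $U=u$ the state of $VE$ lives on a space of dimension $|\mathrm{supp}\,P_{V|U=u}|$, and symmetrically, so the environment can be realized inside a space of $\log$-dimension $\Chmaxeps{}{U}{V}+\Chmaxeps{}{V}{U}$; I would spell this Schmidt-decomposition argument out as a short lemma. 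Substituting and rearranging gives exactly
\begin{equation*}
\Chmaxeps{}{U}{V}+\Chmaxeps{}{V}{U} \;\ge\; t - \eps'\log|\mX| - h(\eps') - \delta_1(\eps),
\end{equation*}
and the remaining bookkeeping is to check that the Step-1 continuity loss $\delta_1(\eps)$ is at most $\eps\log|\mX| + g(\eps)$, which is precisely the Alicki–Fannes bound $|H(X\mid B)_\rho - H(X\mid B)_\sigma| \le \eps\log|\mX| + g(\eps)$ applied to the two $\eps$-close states from Condition~\ref{con:secure-sfe}\ref{item:secure-sfe-security-alice}.

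\textbf{Main obstacle.} The delicate point is Step~2 together with the precise form of the dimension bound in Step~3: one must make sure that (a) the augmented protocol genuinely still satisfies the hypotheses of Proposition~\ref{prop:strong-imposs} (correctness and security \emph{for Bob}) after moving $E$ to Bob — security for Bob is about what \emph{Alice} learns, so giving Bob the environment is harmless, but this needs a clean statement; and (b) the passage from "$E$ purifies $P_{UV}$" to "$\log\dim\mathcal H_E \le \Chmaxeps{}{U}{V}+\Chmaxeps{}{V}{U}$" is argued with the right (tight) constant, since a naive bound would only give $\log|\mathrm{supp}\,P_{UV}|$, which is too weak for the intended OT corollaries. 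I expect the Schmidt-decomposition lemma and the careful error accounting (ensuring the four error terms land exactly as $(\eps+\eps')\log|\mX| + g(\eps) + h(\eps')$) to be where most of the real work lies; everything else is a routine chain of continuity and subadditivity inequalities.
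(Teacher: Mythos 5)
Your Steps 1 and 2 coincide with the paper's own proof: the concavity-plus-continuity lower bound $\chvn{X}{B}{\rho^{\bar y}} \geq t - \eps\log|\mX| - g(\eps)$ from security for Alice, and the Fano upper bound on $\chvn{X}{BB'}{\rho^{\bar y}}$ after applying Proposition~\ref{prop:strong-imposs} to the modified protocol in which Bob also holds the purifying register. The genuine gap is your Step 3. The ``Schmidt-decomposition lemma'' you defer to is false: any purification of the classical state $\rho_{UV}=\sum_{u,v}P_{UV}(u,v)\proj{u,v}$ needs $\dim\mathcal H_E\geq\mathrm{rank}(\rho_{UV})=|\supp{P_{UV}}|$, because the reduced states on $UV$ and on $E$ of a pure state have the same nonzero spectrum. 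Hence $\log\dim\mathcal H_E\geq\log|\supp{P_{UV}}|$, which is in general strictly larger than $\Chmaxeps{}{U}{V}+\Chmaxeps{}{V}{U}$: for $m$ instances of $\OT{1}{2}{k}$ one has $\log|\supp{P_{UV}}|=m(2k+1)$ while $\Chmaxeps{}{U}{V}+\Chmaxeps{}{V}{U}=m(k+1)$ --- exactly the resources the corollaries are about. So no choice of purification can make your dimension bound true, and the naive bound $\log|\supp{P_{UV}}|$, which you rightly call too weak, is the best that pure dimension counting on $E$ can give. A secondary problem in the same step: the inequality $\chvn{X}{B'E}{\rho}\geq\chvn{X}{B'}{\rho}-\log\dim\mathcal H_E$ is not the generic dimension bound for quantum side information; for classical $X$ the loss can reach $2\log\dim\mathcal H_E$ (take $X$ uniform on four values and $B'E$ in the $X$-th Bell state, where $B'$ alone is useless but $B'E$ determines $X$), so even this inequality would need the special basis-copy structure of $E$ rather than generic subadditivity.

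The paper's proof never bounds $\dim\mathcal H_E$ at all. It bounds the entropy gap $\chvn{X}{B}{\rho^{\bar y}}-\chvn{X}{BB'}{\rho^{\bar y}}$ directly, using the chain rule for an added classical register, inequality~\eqref{chain-rule-c}, together with Lemma~\ref{lem:data-processing-bound-vn} for a projective measurement on the untouched purification register $B'$, which is a computational-basis copy of $(u,v)$; the two conditional max-entropies arise because the relevant supports, given what the parties already hold, are $\supp{P_{U|V=v}}$ and $\supp{P_{V|U=u}}$ rather than the full $\supp{P_{UV}}$. As the discussion after the theorem stresses, a two-term bound of this form is also what one should expect, since quantum protocols can reverse OT and a single-entropy bound would be violated. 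To repair your write-up you must replace Step 3 by an argument of this type; your error bookkeeping, i.e.\ collecting the losses as $(\eps+\eps')\log|\mX|+g(\eps)+h(\eps')$, does match the paper.
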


Since any non-trivial function is partially concealing, we have $\min_{y \in \mY} \cHS{X}{f(X,y)} > 0$ for any non-trivial function $f$ (see also proof of Theorem~\ref{thm:impossibility-sfe} in Appendix~\ref{app:impossibility-sfe}).

Theorem~\ref{thm:sfe-reductions} shows that $\min_{y \in \mY} \cHS{X}{f(X,y)}$ is a lower bound on the sum of two conditional max-entropies $\Chmaxeps{}{U}{V}$ and $\Chmaxeps{}{V}{U}$. The classical bound~\cite{WolWul08,PrabhakaranP14} on the other hand, only contains one conditional entropy $\cHS{U}{V}$.  This begs the question whether the two entropies are necessary. This is indeed the case. The reason is that quantum protocols can efficiently reverse $\OT{1}{2}{k}$~\cite{WW14} and violate this classical bound by an arbitrarily large factor. This approach can be generalized to other resource primitives and shows that quantum protocols can take advantage of the entropy contained in $P_{UV}$ in both directions to achieve $\min_{y \in \mY} \cHS{X}{f(X,y)}$ and securely implement a function. 

\subsection{Applications of Lower Bound}

The following statements all directly follow from Theorem~\ref{thm:sfe-reductions}. The proofs are given in Appendix~\ref{app:applications}.

\paragraph{1-out-of-n bit-OT cannot be extended}

In~\cite{WW14} it has been shown that $\OT{1}{2}{1}$ cannot be extended, i.e., there does not exist a protocol using $n$ instances of $\OT{1}{2}{1}$ that securely implements $m > n$ instances of $\OT{1}{2}{1}$. The statement of Theorem~\ref{thm:sfe-reductions} allows us to generalize this result to the case of extensions of $\OT{1}{n}{1}$.
\begin{restatable}{corollary}{extendingot}\label{cor:extend-ot}
For any $n \geq 2$, there exists a constant $c_n > 0$ such that any quantum reduction of $m + 1$ instances of $\OT{1}{n}{1}$ to $m$ instances of $\OT{1}{n}{1}$ must have an error of at least $\frac{c_n}{m}$.
\end{restatable}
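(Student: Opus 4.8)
The plan is to instantiate Theorem~\ref{thm:sfe-reductions} with the function computed by $\OT{1}{n}{1}$ and the resource primitive $P_{UV}$ corresponding to $m$ copies of the oblivious key for $\OT{1}{n}{1}$, and then to compute the two conditional max-entropies on the left-hand side together with the quantity $t$ on the right. First I would recall from Section~\ref{sec:reductions-of-sfe} that a single instance of $\OT{1}{n}{1}$ corresponds to the oblivious-key distribution $P_{UV}$ in which $U=(x_0,\dots,x_{n-1})\in\sbin^n$ is uniform and $V=(c,x_c)$ with $c$ uniform in $[n]$; the resource consisting of $m$ such instances is the $m$-fold product of this distribution. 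For this product, $\Chmaxeps{}{U}{V}$ is the log of the maximal support size of $U$ given a fixed $V$: fixing one bit of each of the $m$ blocks leaves $n-1$ free bits per block, so $\Chmaxeps{}{U}{V}=m(n-1)$. Symmetrically $V$ given $U$ has, in each block, $n$ choices for $c$ (the value $x_c$ being then determined), so $\Chmaxeps{}{V}{U}=m\log n$. Hence the left-hand side of Theorem~\ref{thm:sfe-reductions} equals $m(n-1)+m\log n=m(n-1+\log n)$.

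Next I would handle the target: we want to implement $m+1$ instances of $\OT{1}{n}{1}$, which as a function is $f$ on $\mX=(\sbin^n)^{m+1}$ (Alice's inputs) and $\mY=[n]^{m+1}$ (Bob's choice vectors), with output the selected bit in each block. I need $t=\min_{y\in\mY}\cHS{X}{f(X,y)}$ where $X$ is uniform over $\mX$: for any fixed choice vector $y$, the output reveals exactly one bit per block, so $\cHS{X}{f(X,y)}=(m+1)(n-1)$, and thus $t=(m+1)(n-1)$. I should also note $\log|\mX|=(m+1)n$ and $|\mY|=n^{m+1}$, so $\eps'=2|\mY|^2\sqrt\eps=2n^{2(m+1)}\sqrt\eps$. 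Plugging into Theorem~\ref{thm:sfe-reductions}:
\begin{equation*}
m(n-1+\log n)\;\geq\;(m+1)(n-1)-(\eps+\eps')(m+1)n-g(\eps)-h(\eps')\;.
\end{equation*}
Rearranging, the ``extension gain'' term is $(m+1)(n-1)-m(n-1+\log n)=(n-1)-m\log n$, which for $n\ge 2$ is negative once $m$ is moderately large; so this direct substitution does not immediately yield a contradiction. The main obstacle is therefore the same subtlety as in the $\OT{1}{2}{1}$ case of~\cite{WW14}: one does not compare against the trivial resource but must iterate or use a self-reduction. The standard fix is to argue that if $m+1$ instances could be built from $m$ instances with error $\eps$, then by composing this reduction $k$ times one builds $m+k$ instances from $m$ instances with error $k\eps$; applying Theorem~\ref{thm:sfe-reductions} to that composed reduction (resource $=m$ instances, target $=m+k$ instances) gives, for the dominant terms,
\begin{equation*}
m(n-1+\log n)\;\geq\;(m+k)(n-1)-O\!\big((k\eps+2n^{2(m+k)}\sqrt{k\eps})\,(m+k)n\big)-g(k\eps)-h(\cdot)\;,
\end{equation*}
so that the genuinely growing term $k(n-1)$ on the right eventually exceeds the fixed left-hand side unless the error terms grow comparably. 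Choosing $k$ of order $m$ (so that $n^{2(m+k)}$ and $\log|\mX|$ stay polynomial in $m$ for fixed $n$) and isolating $\eps$, one finds that $\eps$ must be at least $c_n/m$ for a constant $c_n>0$ depending only on $n$; the constant comes out of balancing the linear-in-$k$ gain against the $(k\eps)\cdot\mathrm{poly}(m,n)$ and $\sqrt{k\eps}\cdot n^{O(m)}$ error contributions, exactly as in the binary case. I would present the composition lemma first, then the single application of Theorem~\ref{thm:sfe-reductions} to the composed protocol, and finally the elementary optimization over $k$ to extract the $c_n/m$ lower bound; the bookkeeping of which error term dominates (the polynomial $(k\eps)\cdot mn$ term versus the exponential $n^{O(m)}\sqrt{k\eps}$ term) is the only delicate point and is what pins down $c_n$.
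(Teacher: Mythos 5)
Your overall strategy matches the paper's: iterate the assumed extension protocol (the paper goes from $m$ to $4m$ instances with accumulated error $3m\eps$), apply Theorem~\ref{thm:sfe-reductions} with the resource being the $m$ oblivious keys (your computation $\Chmaxeps{}{U}{V}+\Chmaxeps{}{V}{U}=m(n-1+\log n)$ and $t=(m+k)(n-1)$ is correct), and extract $\eps \geq c_n/m$. However, there is a genuine gap in your handling of $\eps'$. You take $\mY=[n]^{m+k}$ for the composed target and hence $\eps'=2|\mY|^2\sqrt{k\eps}=2n^{2(m+k)}\sqrt{k\eps}$, and then assert that with $k=O(m)$ this factor ``stays polynomial in $m$ for fixed $n$'' --- it does not: $n^{2(m+k)}$ is exponential in $m$ for every $n\geq 2$. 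If you carry this term through the final optimization, making the error contributions comparable to the linear gain $k(n-1)$ forces only $\sqrt{k\eps}\lesssim n^{-2(m+k)}$, i.e.\ a bound of the form $\eps\gtrsim n^{-\Theta(m)}$, which is exponentially weaker than the claimed $c_n/m$ and does not prove the corollary.

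The missing idea is to shrink Bob's input alphabet before invoking Theorem~\ref{thm:sfe-reductions} (equivalently, to invoke Proposition~\ref{prop:strong-imposs} for only a small covering set of Bob's inputs), exactly as the paper does for the inner-product and equality functions: restrict Bob to the $n$ ``constant'' choice vectors $y=(c,c,\dots,c)$, $c\in[n]$. A secure implementation of the $(m+k)$-fold $\OT{1}{n}{1}$ is in particular a secure implementation of this restriction; the restricted function is still non-redundant (the $n$ constant vectors together reveal every bit of Alice's input), $t$ is unchanged at $(m+k)(n-1)$, but now $|\mY|=n$, so $\eps'=O(n^2)\sqrt{k\eps}$ is independent of $m$ (the paper uses $\tilde\eps'=3n^2\sqrt{\tilde\eps}$). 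With this replacement and $k=3m$, dividing the resulting inequality by $m$ gives $(\tilde\eps+\tilde\eps')\cdot 4n + \bigl(g(\tilde\eps)+h(\tilde\eps')\bigr)/m \geq 3(n-1)-\log n$, whose right-hand side is a positive constant for $n\geq 2$; hence $\tilde\eps=3m\eps$ must exceed a constant depending only on $n$, which is the $c_n/m$ bound. Without the alphabet restriction your argument as written does not close.
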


\paragraph{The number of choices of OT cannot be extended}

We consider reductions of $\OT{1}{n}{k}$ to $\OT{1}{2}{k}$. A possible generalization of the proof of Theorem 8 in~\cite{WW14} would imply a lower bound of essentially one instance. 

Our Theorem~\ref{thm:sfe-reductions} implies that we need at least roughly $n-1$ instances of $\OT{1}{2}{k}$ for large $k$. Thus, our result can improve the previously known bound by an arbitrarily large factor, depending on $n$.
To obtain this stronger bound, it is crucial that Bob can extract \emph{all} inputs with high probability according to Proposition~\ref{prop:strong-imposs}.

\begin{restatable}{corollary}{extendingchoices}\label{cor:extend-choices-ot}
Any implementation of $\OT{1}{n}{k}$ with an error $\eps \geq 0$ with $m$ instances of $\OT{1}{2}{k}$ as a resource must fulfill
\begin{equation*}
   m \geq \frac{\left((1-\eps-\eps')\cdot n -1\right)k-g(\eps)-h(\eps')}{k+1}\;,
\end{equation*}
where $\eps':=2n^2\sqrt{\eps}$ and $g(x):=(1+x)\cdot h\left(\frac{x}{1+x}\right)$.
\end{restatable}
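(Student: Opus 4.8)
\textbf{Proof proposal for Corollary~\ref{cor:extend-choices-ot}.}

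The plan is to apply Theorem~\ref{thm:sfe-reductions} to the function $f$ corresponding to $\OT{1}{n}{k}$ and to identify the resource $P_{UV}$ with $m$ independent copies of the oblivious key for $\OT{1}{2}{k}$. First I would set up the function: Alice's input is $x = (x_0, \ldots, x_{n-1}) \in (\sbin^k)^n$, Bob's input is $y = c \in [n]$, and $f(x,c) = x_c$. This function is non-redundant (changing any coordinate of $x$ changes the output for the corresponding choice), so Theorem~\ref{thm:sfe-reductions} applies. The key quantity $t = \min_{c \in [n]} \cHS{X}{f(X,c)}$ must be computed when $X$ is uniform over $(\sbin^k)^n$: given $f(X,c) = x_c$, the remaining $n-1$ blocks are still uniform and independent, so $\cHS{X}{f(X,c)} = (n-1)k$ for every $c$, hence $t = (n-1)k$.

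Next I would bound the two max-entropies of the resource. For $m$ independent copies of the $\OT{1}{2}{k}$ oblivious key, $U$ consists of $m$ pairs of $k$-bit strings, so $\Chmaxeps{}{U}{V} \leq \log|\mU| = 2mk$ trivially, but this is too weak. The sharper bound comes from conditioning: given Bob's side $V$ (the $m$ choice bits together with the $m$ selected strings), Alice's side $U$ retains, per instance, only the \emph{unselected} $k$-bit string, so $\Chmaxeps{}{U}{V} = mk$. Similarly, given Alice's side $U$ (all $m$ pairs of strings), Bob's side $V$ is determined by the $m$ choice bits alone, so $\Chmaxeps{}{V}{U} = m$. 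Therefore $\Chmaxeps{}{U}{V} + \Chmaxeps{}{V}{U} = mk + m = m(k+1)$.

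Plugging these into Theorem~\ref{thm:sfe-reductions} gives
\[
m(k+1) \;\geq\; (n-1)k - (\eps + \eps')\log|\mX| - g(\eps) - h(\eps'),
\]
with $\eps' = 2|\mY|^2\sqrt{\eps} = 2n^2\sqrt{\eps}$ since $|\mY| = n$, and $\log|\mX| = \log\big((2^k)^n\big) = nk$. Substituting $\log|\mX| = nk$ and rearranging the term $(\eps+\eps')nk = (\eps+\eps')\cdot nk$ so that it combines with $(n-1)k$ to give $\big((1-\eps-\eps')n - 1\big)k$, then dividing by $k+1$, yields exactly the claimed bound $m \geq \frac{\big((1-\eps-\eps')n - 1\big)k - g(\eps) - h(\eps')}{k+1}$. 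The main obstacle I anticipate is not any of these steps individually but making the identification of the max-entropies fully rigorous: one must be careful that the $\RabinOT{}{}$-style equivalence lets us replace the $m$ OT instances by the oblivious-key distribution $P_{UV}$ \emph{without loss of security}, and that the reduction to $U = \K{U}{V}$, $V = \K{V}{U}$ used in Theorem~\ref{thm:sfe-reductions} is precisely the form the oblivious key already takes (it does, by construction of the randomized primitive). Once that bookkeeping is in place, the rest is the routine substitution above.
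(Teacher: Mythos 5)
Your proposal is correct and follows essentially the same route as the paper: identify the resource's max-entropies as $\Chmaxeps{}{U}{V}+\Chmaxeps{}{V}{U}=m(k+1)$, note $t=(n-1)k$, $\log|\mX|=nk$ and $|\mY|=n$ for the $\OT{1}{n}{k}$ function, and plug into Theorem~\ref{thm:sfe-reductions}. The paper's proof is just a terser version of this same substitution (it states $m(k+1)$ without the per-instance computation you spell out), so there is nothing to add.
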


There is an information-theoretically secure classical protocol that implements $\OT{1}{n}{k}$ from $n-1$ instances of $\OT{1}{2}{k}$~\cite{DodMic99}. Quantum protocols can therefore not improve (substantially) on this.

\paragraph{Inner product function} 

The \emph{inner-product-modulo-two} function $\IP{n}:\{0,1\}^n\times \{0,1\}^n \rightarrow \{0,1\}^n$ is defined as $\IP{n}(x,y) := \oplus_{i=1}^n x_iy_i$. 
Theorem~\ref{thm:sfe-reductions} implies that approximately $n/2$ instances of $\OT{1}{2}{1}$ are needed to compute the inner product function.

\begin{restatable}{corollary}{innerproduct}\label{cor:reductions:inner-product}
For any protocol that implements the inner-product-modulo-two function $\IP{n} : \sbin^n \times \sbin^n \rightarrow \sbin$ from $m$ instances of $\OT{1}{2}{1}$ with an error $\eps \geq 0$ in the \emph{malicious} model, it must hold that 
\begin{equation*}
2m \geq (n-1) - (\eps+\eps') \cdot n - g(\eps) - h(\eps')\;, 
\end{equation*}
where $\eps':=2n^2\sqrt{\eps}$ and $g: x \mapsto (1+x)\cdot h\left(\frac{x}{1+x}\right)$.
\end{restatable}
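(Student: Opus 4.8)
The plan is to instantiate Theorem~\ref{thm:sfe-reductions} with the inner-product-modulo-two function $f = \IP{n}$ and with the resource primitive $P_{UV}$ corresponding to the oblivious key of $m$ independent copies of $\OT{1}{2}{1}$, and then to evaluate the three quantities appearing in the bound: the quantity $t = \min_{y}\cHS{X}{f(X,y)}$, and the two conditional max-entropies $\Chmaxeps{}{U}{V}$ and $\Chmaxeps{}{V}{U}$ of the resource.

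First I would compute $t$. For $\IP{n}$ we take $X$ uniform on $\sbin^n$. For any fixed $y\in\sbin^n$ with $y\neq 0$, the map $x\mapsto \langle x,y\rangle = \bigoplus_i x_iy_i$ is a nonzero linear functional, hence the preimage of each value in $\sbin$ has size $2^{n-1}$, so $\cHS{X}{\IP{n}(X,y)} = n-1$. For $y = 0$ the function value is constant and $\cHS{X}{\IP{n}(X,0)} = n$. Therefore $t = \min_y \cHS{X}{f(X,y)} = n-1$. (This also confirms $\IP{n}$ is non-trivial/partially concealing and — since distinct $x$ are separated by taking $y$ a suitable basis vector — non-redundant, so Theorem~\ref{thm:sfe-reductions} applies.)

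Next I would bound the conditional max-entropies of the resource. For a single instance of $\OT{1}{2}{1}$ the oblivious key is $U = (x_0,x_1)\in\sbin^2$ on Alice's side and $V = (c, x_c)\in \sbin^2$ on Bob's side, with the joint distribution putting mass $1/8$ on each consistent tuple. Given $V = (c,x_c)$, the conditional support of $U$ over the remaining unknown bit $x_{1-c}$ has size $2$, so $\Chmaxeps{}{U}{V} = 1$ for one instance; by additivity of the conditional max-entropy over independent copies (the support of a product is the product of supports), $\Chmaxeps{}{U}{V} = m$ for $m$ copies. Symmetrically, given $U = (x_0,x_1)$, the conditioning leaves $c$ undetermined, so the conditional support of $V = (c, x_c)$ has size $2$, giving $\Chmaxeps{}{V}{U} = m$ as well. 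Hence $\Chmaxeps{}{U}{V} + \Chmaxeps{}{V}{U} = 2m$. Plugging $t = n-1$, $\eps' = 2|\mY|^2\sqrt{\eps} = 2n^2\sqrt{\eps}$ (here $|\mY| = 2^n$… — wait, one must be careful: $\mY = \sbin^n$ so $|\mY| = 2^n$, not $n$; I would need to recheck whether the corollary intends a sharper bound on $\eps'$ coming from the number of \emph{distinct function-value vectors} Bob must extract rather than all of $|\mY|$), and $\log|\mX| = n$ into Theorem~\ref{thm:sfe-reductions} gives $2m \ge (n-1) - (\eps+\eps')n - g(\eps) - h(\eps')$, which is exactly the claimed inequality.

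The main obstacle I anticipate is precisely the discrepancy just flagged: the statement of Corollary~\ref{cor:reductions:inner-product} uses $\eps' = 2n^2\sqrt{\eps}$, whereas a naive substitution of $|\mY| = 2^n$ into Theorem~\ref{thm:sfe-reductions} would yield $\eps' = 2\cdot 2^{2n}\sqrt{\eps}$, which is far worse. Resolving this requires revisiting Proposition~\ref{prop:strong-imposs} and Theorem~\ref{thm:sfe-reductions}: the relevant parameter is not the number of inputs $y$ but the number $m'$ of inputs whose function values Bob actually needs in order to reconstruct Alice's input $x$ — and for $\IP{n}$ it suffices to query the $n$ standard basis vectors $e_1,\dots,e_n$ (since $\langle e_i, x\rangle = x_i$), so only $n$ of Bob's inputs matter. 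I would therefore either invoke a version of Proposition~\ref{prop:strong-imposs}/Theorem~\ref{thm:sfe-reductions} parametrized by the number of queried inputs $m'$ (with $\eps' = 2(m')^2\sqrt{\eps}$ and the penalty term $(\eps+\eps')\log|\mX|$ unchanged), set $m' = n$, and note that extracting $x_1,\dots,x_n$ determines $x$ hence all function values; this is the step where care is needed and where the quantitative improvement over a black-box application lives. Everything else — the max-entropy computations and the arithmetic — is routine.
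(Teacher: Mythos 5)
Your proposal is correct and in essence identical to the paper's proof: the paper resolves the $\eps'$ discrepancy exactly the way you anticipated, via the $n$ standard basis vectors — concretely, it restricts the function to $\sbin^n\times\mS$ with $\mS=\{e_1,\dots,e_n\}$, notes that this restriction is still non-redundant with $|\mY|=n$ (hence $\eps'=2n^2\sqrt{\eps}$) and that $|\{x:\;f(x,y)=z\}|\geq 2^{n-1}$ gives $t\geq n-1$, and then applies Theorem~\ref{thm:sfe-reductions} as stated to the restricted function instead of re-parametrizing Proposition~\ref{prop:strong-imposs}. Your computations $t=n-1$ and $\Chmaxeps{}{U}{V}+\Chmaxeps{}{V}{U}=2m$ coincide with the paper's.
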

There exists a perfectly secure classical protocol that computes the inner product function from $n$ instances of $\OT{1}{2}{1}$~\cite{BeiMal04,WW14} and that can be used also in the quantum case. This implies that the bound of Corollary~\ref{cor:reductions:inner-product} is tight up to a factor of two.

\paragraph{Equality function} 

The \emph{equality} function $\EQ{n}:\{0,1\}^n\times \{0,1\}^n\rightarrow \{0,1\}$ is defined as
\begin{align*}
\EQ{n}(x,y):=\begin{cases}
  1, & \text{if } x=y\;,\\
  0, & \text{otherwise }\;.
\end{cases}
\end{align*}

\begin{restatable}{corollary}{equalityfunction}\label{cor:reductions:equality}
For any protocol that implements the equality function $\EQ{n} : \sbin^n \times \sbin^n \rightarrow \sbin$ from $m$ instances of $\OT{1}{2}{1}$ with an error $\eps$ in the \emph{malicious} model, it must hold for all $0 < k \leq n$ that 
\begin{equation*}
2m \geq (1-\eps-\eps')\cdot k - g(\eps)-h(\eps') -1\;, 
\end{equation*}
where $\eps':=2^{2k+1}\sqrt{\eps}$ and $g: x \mapsto (1+x)\cdot h\left(\frac{x}{1+x}\right)$.
\end{restatable}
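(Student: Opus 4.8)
The plan is to instantiate Theorem~\ref{thm:sfe-reductions} with the function $f = \EQ{n}$ and the resource $P_{UV}$ being the oblivious key corresponding to $m$ instances of $\OT{1}{2}{1}$, then bound each term appearing on the right-hand side. First I would compute the conditional max-entropies of the resource. For a single instance of $\OT{1}{2}{1}$, the oblivious key has $U = (x_0,x_1) \in \sbin^2$ and $V = (c, x_c) \in \sbin^2$; given $V=(c,x_c)$, the remaining bit $x_{1-c}$ is uniform, so $\Chmaxeps{}{U}{V} = 1$, and given $U=(x_0,x_1)$, the pair $(c,x_c)$ ranges over a set of size $2$, so $\Chmaxeps{}{V}{U} = 1$. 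For $m$ independent instances these add, giving $\Chmaxeps{}{U}{V} = \Chmaxeps{}{V}{U} = m$, hence the left-hand side of Theorem~\ref{thm:sfe-reductions} is $2m$.

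Next I would lower-bound $t := \min_{y} \cHS{X}{f(X,y)}$ for $f = \EQ{n}$ with honest Alice's input $X$ uniform on $\sbin^n$. For any fixed $y$, the output $\EQ{n}(X,y)$ equals $1$ with probability $2^{-n}$ (when $X=y$) and $0$ otherwise; conditioned on the output being $0$, $X$ is uniform on a set of size $2^n-1$. Thus $\cHS{X}{\EQ{n}(X,y)} = (1-2^{-n})\log(2^n-1) \ge (1-2^{-n})(n-1) \ge n-1$ for $n \ge 1$ — this is already close to $n$, but the corollary states a bound in terms of a free parameter $k \le n$. The cleaner route, matching the stated $\eps' = 2^{2k+1}\sqrt\eps$, is to apply the theorem not to $\EQ{n}$ directly but to a subfunction obtained by restricting \emph{both} players' inputs to a $2^k$-element subset $S \subseteq \sbin^n$ (say the strings whose last $n-k$ bits are zero): the restriction $\EQ{n}|_{S\times S}$ is exactly $\EQ{k}$ after relabeling, it is non-redundant, and any secure implementation of $\EQ{n}$ yields a secure implementation of it with the same error and same resource. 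Applying Theorem~\ref{thm:sfe-reductions} to this $\EQ{k}$ gives $|\mY| = 2^k$, hence $\eps' = 2|\mY|^2\sqrt\eps = 2^{2k+1}\sqrt\eps$ as claimed, and $t = (1-2^{-k})\log(2^k-1)$.

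Finally I would assemble the inequality: Theorem~\ref{thm:sfe-reductions} gives
\[
2m \ge t - (\eps+\eps')\log|\mX| - g(\eps) - h(\eps'),
\]
with $|\mX| = 2^k$, so $(\eps+\eps')\log|\mX| = (\eps+\eps')k$. It remains to check $t = (1-2^{-k})\log(2^k-1) \ge k - 1$, which follows since $\log(2^k-1) \ge k-1$ for $k \ge 1$ (as $2^k - 1 \ge 2^{k-1}$) and $(1-2^{-k}) \ge \tfrac12$ — more carefully, $(1-2^{-k})\log(2^k-1) \ge (1-2^{-k})(k-1) = (k-1) - (k-1)2^{-k} \ge k - 1 - \eps k$ is not quite the needed form, so I would instead bound $t \ge k-1-\eps k$ loosely or simply absorb the $2^{-k}$ loss into the existing $-1$ slack in the statement, noting $(1-2^{-k})\log(2^k-1) \ge k-1$ holds directly for all $k\ge 1$ since $k\,2^{-k} + \log(1-2^{-k}) \ge -1$... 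The main obstacle is this last elementary estimate: verifying that $(1-2^{-k})\log(2^k-1) \ge k-1$ (equivalently that the entropy loss from the single atypical output value, plus the $\log$-rounding, together cost at most the ``$-1$'' already present in the target bound) for every integer $k \ge 1$. This is a routine but slightly delicate inequality; once it is in hand, substituting $t \ge (1-\eps-\eps')k - (\text{the } -1)$ — after folding the $(\eps+\eps')k$ correction together with $t \ge k-1$ — yields exactly
\[
2m \ge (1-\eps-\eps')\cdot k - g(\eps) - h(\eps') - 1,
\]
completing the proof.
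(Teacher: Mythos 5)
Your proposal follows essentially the same route as the paper: restrict both players' inputs to a subset of $\sbin^n$ of size $2^k$ (so the restricted function is a relabelling of $\EQ{k}$ and is non-redundant), use $\Chmaxeps{}{U}{V}+\Chmaxeps{}{V}{U}=2m$ for the oblivious keys corresponding to $m$ instances of $\OT{1}{2}{1}$, and apply Theorem~\ref{thm:sfe-reductions} with $|\mX|=|\mY|=2^k$, which indeed yields $\eps'=2|\mY|^2\sqrt{\eps}=2^{2k+1}\sqrt{\eps}$ and the term $(\eps+\eps')\log|\mX|=(\eps+\eps')k$. The one step you flag as the ``main obstacle'' and leave unresolved, namely $t=(1-2^{-k})\log(2^k-1)\geq k-1$, is in fact immediate: for fixed $y$ the output $\EQ{k}(X,y)$ is a deterministic function of $X$, so $\cHS{X}{\EQ{k}(X,y)}=\Hop(X)-\Hop(\EQ{k}(X,y))=k-h(2^{-k})\geq k-1$, since the binary entropy is at most $1$; equivalently, the deficit $k\,2^{-k}-(1-2^{-k})\log(1-2^{-k})$ is exactly $h(2^{-k})\leq 1$. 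Note that your intermediate attempt via $(1-2^{-k})\log(2^k-1)\geq(1-2^{-k})(k-1)$ goes in the wrong direction (the right-hand side is $\leq k-1$), and the displayed condition $k\,2^{-k}+\log(1-2^{-k})\geq -1$ has a sign slip and is false at $k=1$; the entropy identity above avoids both issues. With $t\geq k-1$ in hand, your chain $2m\geq t-(\eps+\eps')k-g(\eps)-h(\eps')\geq(1-\eps-\eps')k-g(\eps)-h(\eps')-1$ is exactly the paper's conclusion. Incidentally, your value of $t$ is the exact one; the paper states $t=(1-2^{-k})\cdot k$, a harmless overestimate that plays the same role in the final bound.
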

There exists a classical protocol that is \emph{also} secure in the quantum case and that implements $\EQ{n}$ from $2(k-1)$ instances of $\OT{1}{2}{1}$~\cite{BeiMal04,WW14} with an error of $2^{-k}$. The bound of Corollary~\ref{cor:reductions:equality} is therefore tight up to a constant factor.

\section*{Acknowledgment}
The authors would like to thank J\"urg Wullschleger for helpful discussions. This work was supported by the Swiss National Science Foundation Practice-to-Science Grant No 199084.



\bibliographystyle{IEEEtran}
\bibliography{refs}

\begin{thebibliography}{10}
\providecommand{\url}[1]{#1}
\csname url@samestyle\endcsname
\providecommand{\newblock}{\relax}
\providecommand{\bibinfo}[2]{#2}
\providecommand{\BIBentrySTDinterwordspacing}{\spaceskip=0pt\relax}
\providecommand{\BIBentryALTinterwordstretchfactor}{4}
\providecommand{\BIBentryALTinterwordspacing}{\spaceskip=\fontdimen2\font plus
\BIBentryALTinterwordstretchfactor\fontdimen3\font minus
  \fontdimen4\font\relax}
\providecommand{\BIBforeignlanguage}[2]{{%
\expandafter\ifx\csname l@#1\endcsname\relax
\typeout{** WARNING: IEEEtran.bst: No hyphenation pattern has been}%
\typeout{** loaded for the language `#1'. Using the pattern for}%
\typeout{** the default language instead.}%
\else
\language=\csname l@#1\endcsname
\fi
#2}}
\providecommand{\BIBdecl}{\relax}
\BIBdecl

\bibitem{Yao82}
A.~C. Yao, ``Protocols for secure computations,'' in \emph{Proceedings of the
  23rd Annual IEEE Symposium on Foundations of Computer Science (FOCS~'82)},
  1982, pp. 160--164.

\bibitem{EvGoLe85}
S.~Even, O.~Goldreich, and A.~Lempel, ``A randomized protocol for signing
  contracts,'' \emph{Commununications of the ACM}, vol.~28, no.~6, p.
  637–647, 1985.

\bibitem{GolVai87}
O.~Goldreich and R.~Vainish, ``How to solve any protocol problem - an
  efficiency improvement,'' in \emph{Advances in Cryptology --- {CRYPTO}~'87},
  1988, pp. 73--86.

\bibitem{Kilian88}
J.~Kilian, ``Founding cryptography on oblivious transfer,'' in
  \emph{Proceedings of the 20th Annual ACM Symposium on Theory of Computing
  (STOC~'88)}, 1988, pp. 20--31.

\bibitem{BBCS92}
C.~H. Bennett, G.~Brassard, C.~Cr{\'e}peau, and H.~Skubiszewska, ``Practical
  quantum oblivious transfer,'' in \emph{Advances in Cryptology --- CRYPTO
  '91}, 1992, pp. 351--366.

\bibitem{Beaver95}
D.~Beaver, ``Precomputing oblivious transfer,'' in \emph{Advances in Cryptology
  --- EUROCRYPT '95}, 1995, pp. 97--109.

\bibitem{Lo97}
H.~K. Lo, ``Insecurity of quantum secure computations,'' \emph{Physical Review
  A}, vol.~56, p. 1154, 1997.

\bibitem{CreKil88}
C.~Cr{\'e}peau and J.~Kilian, ``Achieving oblivious transfer using weakened
  security assumptions (extended abstract),'' in \emph{Proceedings of the 29th
  Annual IEEE Symposium on Foundations of Computer Science (FOCS~'88)}, 1988,
  pp. 42--52.

\bibitem{CrMoWo04}
C.~Cr{\'e}peau, K.~Morozov, and S.~Wolf, ``Efficient unconditional oblivious
  transfer from almost any noisy channel.'' in \emph{Proceedings of Fourth
  Conference on Security in Communication Networks (SCN)}, 2004, pp. 47--59.

\bibitem{DFMS04}
I.~Damg{\aa}rd, S.~Fehr, K.~Morozov, and L.~Salvail, ``Unfair noisy channels
  and oblivious transfer.'' in \emph{Theory of Cryptography Conference ---
  TCC~'04}, 2004, pp. 355--373.

\bibitem{Wullsc09}
J.~Wullschleger, ``Oblivious transfer from weak noisy channels,'' in
  \emph{Theory of Cryptography Conference --- TCC~'09}, 2009, pp. 332--349.

\bibitem{WolWul04}
S.~Wolf and J.~Wullschleger, ``Zero-error information and applications in
  cryptography,'' in \emph{Proceedings of 2004 IEEE Information Theory Workshop
  (ITW~'04)}, 2004.

\bibitem{WolWul08}
------, ``New monotones and lower bounds in unconditional two-party
  computation.'' \emph{IEEE Transactions on Information Theory}, vol.~54,
  no.~6, pp. 2792--2797, 2008.

\bibitem{NaWi06}
A.~Nascimento and A.~Winter, ``On the oblivious transfer capacity of noisy
  correlations,'' in \emph{Proceedings of the IEEE International Symposium on
  Information Theory (ISIT~'06)}, 2006.

\bibitem{Cachin98}
C.~Cachin, ``On the foundations of oblivious transfer,'' in \emph{Advances in
  Cryptology --- EUROCRYPT '98}, 1998, pp. 361--374.

\bibitem{DaKiSa99}
I.~Damg{\aa}rd, J.~Kilian, and L.~Salvail, ``On the (im)possibility of basing
  oblivious transfer and bit commitment on weakened security assumptions,'' in
  \emph{Advances in Cryptology --- EUROCRYPT '99}, 1999, pp. 56--73.

\bibitem{BrCrWo03}
G.~Brassard, C.~Cr{\'e}peau, and S.~Wolf, ``Oblivious transfers and privacy
  amplification,'' \emph{Journal of Cryptology}, vol.~16, no.~4, pp. 219--237,
  2003.

\bibitem{DFSS06}
I.~Damg{\aa}rd, S.~Fehr, L.~Salvail, and C.~Schaffner, ``Oblivious transfer and
  linear functions,'' in \emph{Advances in Cryptology --- CRYPTO '06}, 2006,
  pp. 427--444.

\bibitem{Wullsc07}
J.~Wullschleger, ``Oblivious-transfer amplification,'' in \emph{Advances in
  Cryptology --- {EUROCRYPT}~'07}, 2007, pp. 555--572.

\bibitem{Beaver89a}
D.~Beaver, ``Formal definitions for secure distributed protocols,'' in
  \emph{Proceedings of the DIMACS Workshop on Distributed Computing and
  Cryptography}, 1989, pp. 47---64.

\bibitem{DodMic99}
Y.~Dodis and S.~Micali, ``Lower bounds for oblivious transfer reductions,'' in
  \emph{Advances in Cryptology --- {EUROCRYPT} '99}, 1999, pp. 42--55.

\bibitem{Maurer99}
U.~Maurer, ``Information-theoretic cryptography,'' in \emph{Advances in
  Cryptology --- CRYPTO '99}, 1999, pp. 47--64.

\bibitem{KKK08}
K.~Kurosawa, W.~Kishimoto, and T.~Koshiba, ``A combinatorial approach to
  deriving lower bounds for perfectly secure oblivious transfer reductions,''
  \emph{IEEE Transactions on Information Theory}, vol.~54, no.~6, pp.
  2566--2571, 2008.

\bibitem{PrabhakaranP14}
V.~M. Prabhakaran and M.~Prabhakaran, ``Assisted common information with an
  application to secure two-party sampling,'' \emph{{IEEE} Transactions on
  Information Theory}, vol.~60, no.~6, pp. 3413--3434, 2014.

\bibitem{WW14}
S.~Winkler and J.~Wullschleger, ``On the efficiency of classical and quantum
  secure function evaluation,'' \emph{IEEE Transactions on Information Theory},
  vol.~60, no.~6, pp. 3123--3143, 2014.

\bibitem{NW08}
A.~C.~A. Nascimento and A.~Winter, ``On the oblivious-transfer capacity of
  noisy resources,'' \emph{IEEE Transactions on Information Theory}, vol.~54,
  no.~6, pp. 2572--2581, 2008.

\bibitem{Ahlswede2013}
R.~Ahlswede and I.~Csiszar, ``On oblivious transfer capacity,'' in \emph{2009
  IEEE Information Theory Workshop on Networking and Information Theory}, 2009,
  pp. 1--3.

\bibitem{RP14}
K.~S. Rao and V.~M. Prabhakaran, ``A new upperbound for the oblivious transfer
  capacity of discrete memoryless channels,'' in \emph{2014 IEEE Information
  Theory Workshop (ITW 2014)}, 2014, pp. 35--39.

\bibitem{SudaWY24}
S.~Suda, S.~Watanabe, and H.~Yamaguchi, ``An improved lower bound on oblivious
  transfer capacity via interactive erasure emulation,'' in \emph{{IEEE}
  International Symposium on Information Theory --- {ISIT} 2024}, 2024, pp.
  1872--1877.

\bibitem{SaScSo09}
L.~Salvail, C.~Schaffner, and M.~Sot\'{a}kov\'{a}, ``On the power of two-party
  quantum cryptography,'' in \emph{Advances in Cryptology --- ASIACRYPT '09},
  2009, pp. 70--87.

\bibitem{WW10}
S.~Winkler and J.~Wullschleger, ``On the efficiency of classical and quantum
  oblivious transfer reductions,'' in \emph{Advances in Cryptology --- CRYPTO
  '10}, 2010, pp. 707--723.

\bibitem{HimWat15}
H.~Tyagi and S.~Watanabe, ``Converses for secret key agreement and secure
  computing,'' \emph{IEEE Transactions on Information Theory}, vol.~61, no.~9,
  pp. 4809--4827, 2015.

\bibitem{Beaver96}
D.~Beaver, ``Correlated pseudorandomness and the complexity of private
  computations,'' in \emph{Proceedings of the 28th Annual ACM Symposium on
  Theory of Computing (STOC~'96)}.\hskip 1em plus 0.5em minus 0.4em\relax ACM
  Press, 1996, pp. 479--488.

\bibitem{Yao95}
A.~C.-C. Yao, ``Security of quantum protocols against coherent measurements,''
  in \emph{Proceedings of the 27th Annual ACM Symposium on Theory of Computing
  (STOC '95)}, 1995, pp. 67--75.

\bibitem{DFLSS09}
I.~Damg{\aa}rd, S.~Fehr, C.~Lunemann, L.~Salvail, and C.~Schaffner, ``Improving
  the security of quantum protocols,'' in \emph{Advances in Cryptology ---
  CRYPTO '09}, 2009, pp. 408--427.

\bibitem{WTHR11}
S.~Winkler, M.~Tomamichel, S.~Hengl, and R.~Renner, ``Impossibility of growing
  quantum bit commitments,'' \emph{Physical Review Letters}, vol. 107, p.
  090502, 2011.

\bibitem{Unruh10}
D.~Unruh, ``Universally composable quantum multi-party computation,'' in
  \emph{EUROCRYPT}, ser. Lecture Notes in Computer Science, H.~Gilbert, Ed.,
  vol. 6110.\hskip 1em plus 0.5em minus 0.4em\relax Springer, 2010, pp.
  486--505.

\bibitem{NieChu00}
M.~Nielsen and I.~Chuang, \emph{Quantum Computation and Quantum
  Information}.\hskip 1em plus 0.5em minus 0.4em\relax {Cambridge University
  Press}, 2000.

\bibitem{Stine55}
W.~F. Stinespring, ``Positive functions on {$C^*$}-algebras,''
  \emph{Proceedings of the American Mathematical Society}, vol.~6, pp.
  211--216, 1955.

\bibitem{Renyi61}
A.~R{\'e}nyi, ``On measures of information and entropy,'' in \emph{Proceedings
  of the 4th Berkeley Symposium on Mathematics, Statistics and Probability},
  1961, pp. 547--561.

\bibitem{GoMiRa85}
S.~Goldwasser, S.~Micali, and C.~Rackoff, ``The knowledge complexity of
  interactive proof-systems,'' in \emph{Proceedings of the 17th Annual ACM
  Symposium on Theory of Computing (STOC~'85)}.\hskip 1em plus 0.5em minus
  0.4em\relax ACM Press, 1985, pp. 291--304.

\bibitem{GoMiWi87}
O.~Goldreich, S.~Micali, and A.~Wigderson, ``How to play any mental game,'' in
  \emph{Proceedings of the 21st Annual ACM Symposium on Theory of Computing
  (STOC~'87)}.\hskip 1em plus 0.5em minus 0.4em\relax ACM Press, 1987, pp.
  218--229.

\bibitem{MicRog91}
S.~Micali and P.~Rogaway, ``Secure computation (abstract),'' in \emph{Advances
  in Cryptology --- CRYPTO '91}, ser. Lecture Notes in Computer Science, vol.
  576.\hskip 1em plus 0.5em minus 0.4em\relax Springer-Verlag, 1992, pp.
  392--404.

\bibitem{Beaver91}
D.~Beaver, ``Foundations of secure interactive computing,'' in \emph{Advances
  in Cryptology --- CRYPTO '91}, ser. Lecture Notes in Computer Science, vol.
  576.\hskip 1em plus 0.5em minus 0.4em\relax Springer-Verlag, 1992, pp.
  377--391.

\bibitem{FS09}
S.~Fehr and C.~Schaffner, ``Composing quantum protocols in a classical
  environment,'' in \emph{Theory of Cryptography Conference --- TCC~'09}, 2009,
  pp. 350--367.

\bibitem{Mayers97}
D.~Mayers, ``Unconditionally secure quantum bit commitment is impossible,''
  \emph{Physical Review Letters}, vol.~78, pp. 3414--3417, 1997.

\bibitem{LoChau97}
H.~K. Lo and H.~F. Chau, ``Is quantum bit commitment really possible?''
  \emph{Physical Review Letters}, vol.~78, pp. 3410--3413, 1997.

\bibitem{HänggiWinkler2025}
\BIBentryALTinterwordspacing
E.~Hänggi and S.~Winkler, ``Impossibility of quantum private queries,'' 2025.
  [Online]. Available: \url{https://arxiv.org/abs/2501.12842}
\BIBentrySTDinterwordspacing

\bibitem{FiWoWu04}
M.~Fitzi, S.~Wolf, and J.~Wullschleger, ``Pseudo-signatures, broadcast, and
  multi-party computation from correlated randomness,'' in \emph{Advances in
  Cryptology --- CRYPTO '04}, 2004, pp. 562--578.

\bibitem{winkler:diss}
S.~Winkler, ``Classical and quantum secure two-party computation,'' Ph.D.
  dissertation, ETH Zurich, 2012.

\bibitem{BeiMal04}
A.~Beimel and T.~Malkin, ``A quantitative approach to reductions in secure
  computation,'' in \emph{Theory of Cryptography Conference --- TCC~'04}, 2004,
  pp. 238--257.

\bibitem{AliFan04}
R.~Alicki and M.~Fannes, ``Continuity of quantum conditional information,''
  \emph{Journal of Physics A: Mathematical and General}, vol.~37, no.~5, p.
  L55, 2004.

\bibitem{Winter2016}
A.~Winter, ``Tight uniform continuity bounds for quantum entropies: Conditional
  entropy, relative entropy distance and energy constraints,''
  \emph{Communications in Mathematical Physics}, vol. 347, no.~1, pp. 291--313,
  2016.

\bibitem{Fano61}
R.~M. Fano, \emph{Transmission of information; a statistical theory of
  communications}.\hskip 1em plus 0.5em minus 0.4em\relax M.I.T. Press, 1961.

\bibitem{Tom2012}
M.~Tomamichel, ``A framework for non-asymptotic quantum information theory,''
  Ph.D. dissertation, ETH Zurich, 2012.

\bibitem{Uhlman76}
A.~Uhlmann, ``The transition probability in the state space of *-algebra,''
  \emph{Rep. Math. Phys.}, vol.~9, pp. 273--279, 1976.

\bibitem{Winter99}
A.~Winter, ``Coding theorem and strong converse for quantum channels,''
  \emph{IEEE Transactions on Information Theory}, vol.~45, no.~7, pp.
  2481--2485, 1999.

\bibitem{Wilde13}
M.~M. Wilde, \emph{Quantum Information Theory}.\hskip 1em plus 0.5em minus
  0.4em\relax Cambridge University Press, 2013.

\end{thebibliography}
\appendices
\section{Preliminaries}

We will use the following properties of the trace distance (see, for example,~\cite{NieChu00}): The trace-distance is a metric and, therefore, fulfills the triangle inequality. Furthermore, the application of a TP-CPM to two states cannot decrease their distance and the trace distance is preserved under unitary transformations.

A \emph{measurement} $\mE$ can be seen as a TP-CPM that takes a quantum state
on system $A$ and maps it to a classical register $X$, which contains the measurement result, and a system $A'$, which contains the post-measurement state. Such a measurement map is of the form $\mE(\rhoA)=\sum_x \proj{x}_X\otimes M_x \rho_{A} M_x^\dagger$, where all $\ket{x}$ are from an orthonormal basis of $\hX$ and the $M_x$ are such that $\sum_x M_x^{\dagger}M_x=\id_A$. The symbol $\idA$ denotes either the identity operator on $\hi{A}$ or the identity operator on $\posops{\hA}$; it should be clear from the context which one is meant. The probability to obtain the measurement result $x$ is $\tr (  M_x \rho_{A} M_x^\dagger)$. Furthermore, the state of $A'$ after result $x$ is $M_x \rho_{A} M_x^\dagger/\tr (  M_x \rho_{A} M_x^\dagger)$. A \emph{projective} measurement is a special case of a quantum measurement where all measurement operators $M_x$ are orthogonal projectors. If the post-measurement state is not important, and we are only interested in the classical result, the measurement can be fully described by the set of operators $\{ \mE_x\}_x=\{ M_x^\dagger M_x\}_x$.

\subsection{Entropies}
We will make use of the following continuity bound~\cite{AliFan04,Winter2016} for the conditional entropy of any two cq-states $\rho_{XB}, \sigma_{XB} \in \hi{AB}$ with  $\dis(\rho_{XB}, \sigma_{XB}) \leq \eps$
\begin{equation} \label{eq:alifan-winter}
\left|\chvn{X}{B}{\rho} -\chvn{X}{B}{\sigma}\right| \leq \eps\cdot \log |X|+g(\eps)\;,
\end{equation}
where $g:x \mapsto (1+x)\cdot h\left(\frac{x}{1+x}\right)$. The conditional entropy is concave (see for example~\cite{NieChu00}), i.e., it holds that 
\begin{equation}\label{eq:cond-entropy-concavity}
 \chvn{A}{B}{\rho} \geq \sum_{x \in \mX} P_X(x)\chvn{A}{B}{\rho^x}\;,   
\end{equation}
for any probability distribution $P_X$ over $\mX$ and any set of states $\{\rho^x_{AB}\}_{x \in \mX}$ with $\rho_{AB}=\sum_{x \in \mX} P_X(x)\rho^x_{AB}$.
Let $X \in \mX$. If there exists a measurement on $B$ with outcome $X'$ such that $\Pr[X' \neq X] \leq \eps$, then we can apply the data-processing inequality (see for example~\cite{NieChu00}) and Fano's inequality~\cite{Fano61} to get
\begin{equation} \label{eq:fano-q}
  \chvn{X}{B}{\rho} \leq \chvn{X}{X'}{} \leq h(\eps) + \eps \cdot \log |\mX|\;.
\end{equation}
The conditional entropy $\chvn{A}{B}{\rho}$ can decrease by at most $\log|\mZ|$ when conditioning on an additional classical system $Z$, i.e., for any tripartite state $\rhoABZ$ that is classical on $Z$ with respect to some orthonormal basis $\{\ket{z}\}_{z \in \mZ}$, we have (see for example~\cite{Tom2012} for a proof)
\begin{align}\label{chain-rule-c}
 \chvn{A}{BZ}{\rho}\geq \chvn{A}{B}{\rho}-\log |\mZ|\;.
\end{align}
The following lemma shows that the entropy $\chvn{A}{BC}{\rho}$ cannot increase too much when a projective measurement is applied to the system $C$.
\begin{lemma}\cite{WTHR11,Tom2012}\label{lem:data-processing-bound-vn}
  Let $\rhoABC$ be a tripartite state. Furthermore, let $\mE$ be a projective measurement in the basis $\{\ket{z} \}_{z \in \mZ}$ on C and 
  $\idx{\rho}{ABZ} := (\idAB \kron \mE )(\rho_{ABC})$. Then
  \begin{align*}
  	\chvn{A}{BC}{\rho} \geq \chvn{A}{BZ}{\rho} - \log \abs{\mZ} \; .
  \end{align*}
\end{lemma}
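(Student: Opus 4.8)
The plan is to reduce the statement to the elementary fact that a projective measurement in a basis of $\abs{\mZ}$ elements can increase the von Neumann entropy of a state by at most $\log\abs{\mZ}$. The key observation is that the measured state $\idx{\rho}{ABZ}$ coincides, up to relabelling the classical register, with the \emph{dephased} state $\sigma_{ABC}:=(\idAB\otimes\mathcal D)(\rho_{ABC})$ obtained from the pinching $\mathcal D(\cdot):=\sum_{z}(\idAB\otimes\proj z)\,(\cdot)\,(\idAB\otimes\proj z)$ on $C$. Indeed, both $\idx{\rho}{ABZ}$ and $\sigma_{ABC}$ equal $\sum_z \tilde\omega^z_{AB}\otimes\proj z$ with $\tilde\omega^z_{AB}=(\idAB\otimes\bra z)\rho_{ABC}(\idAB\otimes\ket z)$, the only difference being whether the (now diagonal) classical index is read in $\h_Z$ or in $\h_C$. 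This relabelling is an isometry on the relevant basis, and since it preserves the von Neumann entropy of every marginal we get $\chvn{A}{BZ}{\rho}=\chvn{A}{BC}{\sigma}$.

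Next I would expand the target difference using $\chvn{A}{BC}{\omega}=H(\omega_{ABC})-H(\omega_{BC})$:
\begin{align*}
\chvn{A}{BZ}{\rho}-\chvn{A}{BC}{\rho}
&=\chvn{A}{BC}{\sigma}-\chvn{A}{BC}{\rho}\\
&=\big[H(\sigma_{ABC})-H(\rho_{ABC})\big]-\big[H(\sigma_{BC})-H(\rho_{BC})\big].
\end{align*}
Both bracketed quantities are the entropy increase caused by dephasing $C$. A pinching is a mixture of unitaries, so it can never decrease the von Neumann entropy; hence the second bracket is non-negative and can be discarded, and it suffices to upper bound the first bracket by $\log\abs{\mZ}$.

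Finally I would bound the entropy increase of the pinching applied to $\rho_{ABC}$. Writing $\mathcal D(\cdot)=\tfrac1{\abs{\mZ}}\sum_{j=0}^{\abs{\mZ}-1}U_j(\cdot)U_j^\dagger$ with the unitaries $U_j:=\idAB\otimes\sum_z e^{2\pi i jz/\abs{\mZ}}\proj z$, the state $\sigma_{ABC}$ is a uniform mixture of the $\abs{\mZ}$ unitarily rotated copies $U_j\rho_{ABC}U_j^\dagger$, each of which has entropy $H(\rho_{ABC})$. The standard upper bound on the entropy of a mixture (entropy of the weights plus the average entropy) then yields $H(\sigma_{ABC})\le \log\abs{\mZ}+H(\rho_{ABC})$, so the first bracket is at most $\log\abs{\mZ}$. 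Combining the three steps gives $\chvn{A}{BZ}{\rho}-\chvn{A}{BC}{\rho}\le\log\abs{\mZ}$, which rearranges to the claimed inequality $\chvn{A}{BC}{\rho}\ge\chvn{A}{BZ}{\rho}-\log\abs{\mZ}$.

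The step I expect to be the main obstacle is isolating the sharp constant $\log\abs{\mZ}$ rather than $2\log\abs{\mZ}$. The tempting shortcut — coherently copying the measured basis of $C$ into $Z$ via an isometry and invoking the chain-rule bound~\eqref{chain-rule-c} on the conditional mutual information $H(A|BC)-H(A|BCZ)=I(A;Z|BC)$ — does not give the factor one, because in the coherent copy the register $Z$ is entangled with $C$ and is not a classical register, so only the generic bound $2\log\abs{\mZ}$ is available. Routing the argument through the genuinely dephased state $\sigma_{ABC}$, where the register becomes classical and the entropy increase is controlled by the mixture-of-unitaries representation, is precisely what recovers the single $\log\abs{\mZ}$.
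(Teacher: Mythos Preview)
Your proof is correct. The identification of $\idx{\rho}{ABZ}$ with the dephased state $\sigma_{ABC}$, the splitting into the two entropy increments, dropping the non-negative $H(\sigma_{BC})-H(\rho_{BC})$ term, and the mixture-of-unitaries bound $H(\sigma_{ABC})\le H(\rho_{ABC})+\log\abs{\mZ}$ are all valid and combine to give exactly the claimed inequality with the sharp constant.

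There is nothing to compare against in the paper itself: the lemma is quoted from~\cite{WTHR11,Tom2012} and no proof is reproduced here. Your argument is essentially the standard one given in those references (the dephasing/pinching representation as a uniform mixture of $\abs{\mZ}$ unitaries is precisely how the bound is obtained there), so in that sense your route matches the intended one. Your closing remark about why the coherent-copy shortcut only yields $2\log\abs{\mZ}$ is also accurate and explains well why the genuinely classical register produced by dephasing is what buys the single factor.
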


\section{Proof of~\cref{prop:strong-imposs}}\label{app:proof-proposition}
In the following, we will show that for any protocol for secure function evaluation that is $\eps$-correct and is $\eps$-secure for Bob, there exists an attack that allows Bob to compute the function value for $m$ of his inputs for any $2 \leq m \leq |\mY|$. The proof will use  Uhlmann's theorem~\cite{Uhlman76} and the Gentle Measurement Lemma~\cite{Winter99,Wilde13}. 

First, we consider two pure states, $\vecstate{\psiAB^{0}}$ and $\vecstate{\psiAB^{1}}$, of Alice and Bob. If the two marginal states, $\rho_A^0$ and $\rho_A^1$, are close in trace distance, then Uhlmann's theorem guarantees the existence of a unitary acting on Alice's side only that transforms the first state $\vecstate{\psiAB^{0}}$ to a state that is close to the second state $\vecstate{\psiAB^{1}}$. The following lemma shows that this also holds when Alice and Bob share two states that are pure conditioned on all the classical information shared between the two parties (classical communication). The proof of the lemma can be found in~\cite{WTHR11}.
\begin{lemma}\label{lem:classical-attack}
For $b \in \{0,1\}$, let 
\[\rho_{XX'AB}^b=\sum_xP_b(x)\vecstate{x}_{X}\kron \vecstate{x}_{X'}\kron \vecstate{\psiAB^{x,b}}\]
with $\dis(\rho_{X'B}^0,\rho_{X'B}^1)\leq \eps$. Then there exists a unitary $U_{AX}$ such that 
\[\dis(\rho_{XX'AB}'^1,\rho_{XX'AB}^1)\leq \sqrt{2 \eps} \]
where $\rho_{XX'AB}'^1=(U_{XA}\kron \id_{X'B})\rho_{XX'AB}^0(U_{XA}\kron \id_{X'B})^\dagger$.
\end{lemma}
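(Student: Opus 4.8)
The plan is to prove this by a blockwise application of Uhlmann's theorem~\cite{Uhlman76}. First I would purify the classical register by appending an ancilla $\hat X$ holding a copy of $X$ and setting
\[
\ket{\Theta^b}_{X\hat XX'AB} := \sum_x \sqrt{P_b(x)}\,\ket{x}_X\ket{x}_{\hat X}\ket{x}_{X'}\ket{\psi^{x,b}_{AB}}\,,
\]
so that $\operatorname{tr}_{\hat X}\proj{\Theta^b} = \rho_{XX'AB}^b$ and $\ket{\Theta^b}$ is a purification of $\rho_{X'B}^b$ with purifying system $X\hat XA$. It then suffices to find a unitary $U_{XA}$ on $\hX\otimes\hA$ such that $(U_{XA}\otimes\id_{\hat XX'B})\ket{\Theta^0}$ is $\sqrt{2\eps}$-close to $\ket{\Theta^1}$: since $U_{XA}$ commutes with $\operatorname{tr}_{\hat X}$ we have $\operatorname{tr}_{\hat X}\bigl[(U_{XA}\otimes\id)\proj{\Theta^0}(U_{XA}\otimes\id)^\dagger\bigr] = \rho_{XX'AB}'^1$ and $\operatorname{tr}_{\hat X}\proj{\Theta^1} = \rho_{XX'AB}^1$, so monotonicity of the trace distance under partial trace gives the claimed bound.

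To build $U_{XA}$ I would not apply Uhlmann to the full purifying system $X\hat XA$ (which would give a unitary acting on $\hat X$ as well) but proceed blockwise in $x$. For each $x$ with $P_0(x)>0$, the vectors $\ket{\psi^{x,0}_{AB}}$ and $\ket{\psi^{x,1}_{AB}}$ are purifications of $\rho^{x,0}_B := \operatorname{tr}_A\proj{\psi^{x,0}_{AB}}$ and $\rho^{x,1}_B$, so by Uhlmann's theorem there is a unitary $V_x$ on $A$ (with global phase fixed) such that $\bra{\psi^{x,1}_{AB}}(V_x\otimes\id_B)\ket{\psi^{x,0}_{AB}} = F(\rho^{x,0}_B,\rho^{x,1}_B) =: F_x$, where $F(\rho,\sigma)=\|\sqrt{\rho}\sqrt{\sigma}\|_1$ is the fidelity. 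I then set $U_{XA} := \sum_x \proj{x}_X\otimes V_x$ (completing arbitrarily for the remaining $x$), which is manifestly unitary.

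The overlap computation is the heart of the argument. Because $X$, $\hat X$ and $X'$ are perfectly correlated in the computational basis in both $\ket{\Theta^0}$ and $\ket{\Theta^1}$, acting with $U_{XA}\otimes\id_{\hat XX'B}$ and pairing against $\bra{\Theta^1}$ removes all cross terms and leaves $\bra{\Theta^1}(U_{XA}\otimes\id)\ket{\Theta^0} = \sum_x \sqrt{P_0(x)P_1(x)}\,F_x$. The key identity is that this sum equals exactly the fidelity $F(\rho_{X'B}^0,\rho_{X'B}^1)$ of the two marginals: fidelity is additive over the orthogonal blocks indexed by the classical register $X'$, and $F(pA,qB)=\sqrt{pq}\,F(A,B)$ for scalars $p,q\geq0$. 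Now the hypothesis $\dis(\rho_{X'B}^0,\rho_{X'B}^1)\leq\eps$ together with the Fuchs--van de Graaf inequality gives $F(\rho_{X'B}^0,\rho_{X'B}^1)\geq 1-\eps$, so the two pure states $(U_{XA}\otimes\id)\ket{\Theta^0}$ and $\ket{\Theta^1}$ have overlap at least $1-\eps$, hence trace distance $\sqrt{1-F(\rho_{X'B}^0,\rho_{X'B}^1)^2}\leq\sqrt{1-(1-\eps)^2}\leq\sqrt{2\eps}$. Tracing out $\hat X$ as in the first paragraph completes the proof.

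The step requiring the most care is the one hidden in the second and third paragraphs: a direct invocation of Uhlmann on $\ket{\Theta^b}$ as purifications of $\rho_{X'B}^b$ yields only a unitary on the whole purifying system $X\hat XA$, whereas the lemma demands one on $XA$ alone. The resolution is that, since $\hat X$ is a perfect copy of $X$ and $X$ is classically correlated with the untouched register $X'$, the optimal Uhlmann unitary may be chosen block-diagonal in this classical index, at which point its action on $\hat X$ is redundant; constructing $U_{XA}$ explicitly and checking that it attains the global fidelity is the cleanest way to make this rigorous. The companion fact — that the fidelity of the two block-diagonal marginals factorizes as $\sum_x\sqrt{P_0(x)P_1(x)}\,F_x$ — is what lets the explicit $U_{XA}$ reach $F(\rho_{X'B}^0,\rho_{X'B}^1)$ exactly, and hence produces the sharp constant $\sqrt 2$ rather than the factor $2$ that a term-by-term triangle-inequality estimate would give.
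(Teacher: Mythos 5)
Your proof is correct, and it follows essentially the same route as the proof the paper defers to in~\cite{WTHR11}: purify the classical register, apply Uhlmann blockwise via a controlled unitary $\sum_x \proj{x}_X\otimes V_x$, use the direct-sum decomposition of fidelity over the classical index $X'$ to show the overlap attains $F(\rho_{X'B}^0,\rho_{X'B}^1)$ exactly, and finish with Fuchs--van de Graaf and monotonicity of the trace distance under tracing out the ancilla. Your closing observation about why the exact fidelity identity (rather than a term-by-term triangle-inequality bound) yields the sharp $\sqrt{2\eps}$ is exactly the point of that argument.
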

The Gentle Measurement Lemma~\cite{Winter99,Wilde13} implies the following upper bound for the disturbance caused by a measurement to a classical-quantum state.
\begin{lemma}\cite{HänggiWinkler2025}\label{lem:gentle-measurement-cq}
Consider a cq-state $\rho_{XA}=\sum_{x \in \mX} P_X(x)\ket{x}\bra{x}\kron \rho_{A}^x$ and measurement acting on system $A$ defined by a trace-preserving completely positive map $\mathcal{E}$ with $\mathcal{E}:\sigma_{XA} \mapsto \sum_x \ket{x}\bra{x}_{X'}\kron\mathcal{E}_x\sigma_{XA}\mathcal{E}_x^\dagger$. Suppose that for each $x \in \mX$ the measurement operator $\mathcal{E}_x$ has a high probability of detecting state $\rho_{A}^x$, i.e., we have $\tr(\mE_x\mathcal\rho_A^x) \geq 1-\eps$ where $\mE_x=M_x^\dagger M_x$. Then for the state $\rho_{X'XA}'=\mE(\rho_{XA})$ we have that $\Pr[X' = X]\geq 1-\eps$ and the post-measurement state, ignoring the measurement outcome, 
\begin{equation*}
\rho_{XA}'=\sum_x(\id_X \kron M_x)\rho_{XA}(\id_X\kron M_x)^\dagger
\end{equation*}is $(\sqrt{\eps}+\eps)$-close to the original state $\rho$, i.e., 
\begin{equation*}
\dis(\rho_{XA},\rho_{XA}') \leq \sqrt{\eps}+\eps\;.
\end{equation*}
\end{lemma}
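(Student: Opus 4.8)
The plan is to reduce the claim to the single-state Gentle Measurement Lemma and then lift it to the cq-setting by exploiting the block-diagonal structure of $\rho_{XA}$ on the classical register $X$. First, the statement about the outcome is immediate: since the branch labelled $x$ carries the state $\rho_A^x$ and the correct outcome is detected with $\tr(\mE_x\rho_A^x)\ge 1-\eps$, we get $\Pr[X'=X]=\sum_x P_X(x)\tr(\mE_x\rho_A^x)\ge\sum_x P_X(x)(1-\eps)=1-\eps$.

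For the disturbance bound I would first record that, because $\rho_{XA}$ and $\rho_{XA}'$ are both classical on $X$ with the same marginal $P_X$ and the same orthonormal basis $\{\ket{x}\}$, the trace distance splits as a convex combination over branches,
\[\dis(\rho_{XA},\rho_{XA}')=\sum_x P_X(x)\,\dis\!\left(\rho_A^x,\mathcal{E}(\rho_A^x)\right),\]
where $\mathcal{E}(\sigma):=\sum_{x'}M_{x'}\sigma M_{x'}^\dagger$ is the full measurement channel applied within each branch (this is exactly $\rho_A^x$ transformed by $\sum_{x'}(\id_X\otimes M_{x'})(\cdot)(\id_X\otimes M_{x'})^\dagger$ restricted to the branch). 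It therefore suffices to bound each term $\dis(\rho_A^x,\mathcal{E}(\rho_A^x))$ by $\sqrt{\eps}+\eps$ and then average.

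For a fixed branch $x$, I would split the channel output into the correct-outcome term and the rest, $\mathcal{E}(\rho_A^x)=M_x\rho_A^x M_x^\dagger+\sum_{x'\neq x}M_{x'}\rho_A^x M_{x'}^\dagger$, reading $M_x=\sqrt{\mE_x}$ as the canonical (minimal-disturbance) square roots. The Gentle Measurement Lemma applied to $\rho_A^x$ with the single POVM element $\mE_x$, using $\tr(\mE_x\rho_A^x)\ge 1-\eps$, gives $\dis(\rho_A^x,M_x\rho_A^x M_x^\dagger)\le\sqrt{\eps}$. The remaining term is a positive operator whose trace is $\sum_{x'\neq x}\tr(\mE_{x'}\rho_A^x)=1-\tr(\mE_x\rho_A^x)\le\eps$, so by the triangle inequality $\dis(M_x\rho_A^x M_x^\dagger,\mathcal{E}(\rho_A^x))\le\tfrac12(1-\tr(\mE_x\rho_A^x))\le\tfrac{\eps}{2}$. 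Adding the two contributions yields $\dis(\rho_A^x,\mathcal{E}(\rho_A^x))\le\sqrt{\eps}+\eps$, and convexity of the branch decomposition proves the claim.

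The main obstacle is the single-branch estimate: one must control the disturbance of the whole measurement channel from the hypothesis that only the single correct outcome is likely. The clean route is the split above—Gentle Measurement for the dominant Kraus operator together with a crude trace bound on the remaining outcomes—but this relies on interpreting $M_x$ as the positive square roots $\sqrt{\mE_x}$. For Kraus operators carrying a nontrivial unitary part the averaged post-measurement state need not be close to $\rho_A^x$ even under perfect detection, so the square-root (minimal-disturbance) reading is what makes the statement hold; everything else is bookkeeping with the triangle inequality and the classical-register decomposition of the trace distance, and the stated constant $\sqrt{\eps}+\eps$ is not tight (the argument in fact gives $\sqrt{\eps}+\eps/2$).
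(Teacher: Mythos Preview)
Your proposal is correct and matches exactly the route the paper indicates: the paper does not supply its own proof of this lemma but cites it from \cite{HänggiWinkler2025} and remarks that it ``implies'' from the Gentle Measurement Lemma~\cite{Winter99,Wilde13}, which is precisely the single-state tool you invoke branch-by-branch after splitting the trace distance over the classical register. Your caveat that the argument requires the canonical square-root Kraus operators $M_x=\sqrt{\mE_x}$ is well taken (and implicit in how the lemma is used downstream), and your sharper constant $\sqrt{\eps}+\eps/2$ is indeed what the computation gives.
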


Using Lemma~\ref{lem:gentle-measurement-cq} we can derive the following lower bound on the success probability of a generic attack of Bob on any SFE protocol.
\extractionsfe*
\begin{proof}
Let $\mY = \{y_1,\ldots,y_n\}$. We assume that Alice chooses her input $X$ uniformly at random from $\mX$. For all inputs $y \in \mY$ of Bob, we can assume that the  corresponding state $\rho_{AB}^y$ at the end of the protocol is pure given all classical communication.

The correctness of the protocol implies that Bob can compute $f(X,y)$ from $\rho^y_{AB}$ with probability at least $1-\eps$. From Lemma~\ref{lem:gentle-measurement-cq} we know that the resulting state $\Tilde{\rho}^y_{AB}$ after computing $f(X,y)$ is $(\sqrt{\eps}+\eps)$-close to $\rho^y_{AB}$, i.e., 
\begin{equation}\label{eq:damage-measurement}
\dis(\rho_{AB}^y, \Tilde{\rho}_{AB}^y) \leq \sqrt{\eps}+\eps\;.
\end{equation}
We define $\rho_{AB}^i:= \rho_{AB}^{y_i}$ for all $i \in \{1,\ldots , n\}$. Since the protocol is secure for an honest Bob, we have that for all $i,j \in \{1,\dots,n\}$
\begin{equation*}
\dis(\rho_{A}^i, \rho_A^j) \leq 2\eps\;.
\end{equation*}
By Lemma \ref{lem:classical-attack} this implies that there exists a unitary $U_{B,i,j}$ such that
\begin{equation}\label{ineq:imposs:ot-full:uhlmann}
\dis\big(\rho_{AB}'^j, \rho_{AB}^j\big)\leq 2\sqrt{\eps}\;,
\end{equation}
where $\rho_{AB}'^j= (\id \kron U_{B,i,j})\rho_{AB}^i(\id \kron U_{B,i,j})^\dagger$. Let $\mathcal{E}_i$ be the operation that computes the value $f(X,y_i)$ from Bob's system $B$ and let $\Tilde{\rho}_{AB}^i = \mathcal{E}_i(\rho_{AB}^i)$ be the resulting state. 
Then, omitting the identity on $A$, it holds that
\begin{align*}
\begin{split}  \dis(U_{B,i,j}\mathcal{E}_i(\rho_{AB}^i)U_{B,i,j}^\dagger &, \rho_{AB}^j)\\
       &=  \dis(U_{B,i,j}\Tilde{\rho}_{AB}^iU_{B,i,j}^\dagger, \rho_{AB}^j)\\
      &\leq  \dis(U_{B,i,j}\Tilde{\rho}_{AB}^iU_{B,i,j}^\dagger, U_{B,i,j}\rho_{AB}^iU_{B,i,j}^\dagger)\\&~~~~~~~~~~~~~+ \dis(U_{B,i,j}\rho_{AB}^iU_{B,i,j}^\dagger, \rho_{AB}^j)\\
      &= \dis( \Tilde{\rho}_{AB}^i, \rho_{AB}^i) + \dis\big(\rho_{AB}'^j, \rho_{AB}^j\big)\\
      &\leq (\sqrt{\eps}+\eps)+2\sqrt{\eps} \\
      &=  3\sqrt{\eps}+\eps\;,
\end{split}
\end{align*}
where the first inequality follows from the triangle inequality, the next equality from  the fact that the trace distance is preserved under unitary transformations and the second inequality follows from~\eqref{eq:damage-measurement} and~\eqref{ineq:imposs:ot-full:uhlmann}.

We have shown that for all $i,j \in \{1,\dots,n\}$ there exists quantum operation $S_{i,j}$, namely first measuring the value $X_i$ and then applying the unitary $U_{B,i,j}$, that computes the value $X_i$ from the state $\rho_{AB}^i$ and results in a state that is $(3\sqrt{\eps}+\eps)$-close to $\rho_{AB}^j$, i.e,
\begin{equation}\label{eq:damage-operation}
    \dis(S_{i,j}(\rho_{AB}^i), \rho_{AB}^j) \leq 3\sqrt{\eps}+\eps\;.
\end{equation}
Thus, the attack allows Bob to compute all of Alice's inputs: Bob has input $y_1$ and honestly follows the protocol except from keeping his state purified. At the end of the protocol, he computes successively all values $f(X,y_j)$ using the above operation $S_{i,i+1}$, which means that  he computes the next value $f(X,y_i)$ and then applies a unitary to rotate the state close to the next target state $\rho_{AB}^{i+1}$. Let $S_i:=S_{i-1,i}$. Next, we will show that 
\begin{equation}\label{eq:damage-succession}
    \dis(S_l(S_{l-1}(\ldots S_2(\rho_{AB}^1))), \rho_{AB}^l)\leq (l-1)(3\sqrt{\eps}+\eps)\;.
\end{equation}
From~\eqref{eq:damage-operation}, we know that the statement holds for $l=2$ 
\begin{equation*}
    \dis(S_2(\rho_{AB}^1),\rho_{AB}^2)\leq 3\sqrt{\eps}+\eps\;.
\end{equation*}
Assume that the statement holds for $l-1$, then we have for any $2 \leq l \leq n$
\begin{align*}
\begin{split}
    \dis(S_l&(S_{l-1}(\ldots S_2(\rho_{AB}^1))), \rho_{AB}^l)\\&\leq\dis((S_l(S_{l-1}(\ldots S_2(\rho_{AB}^1))), S_l(\rho_{AB}^{l-1}))+\dis( S_l(\rho_{AB}^{l-1}),\rho_{AB}^l)\\
    &\leq\dis((S_{l-1}(\ldots S_2(\rho_{AB}^1)), \rho_{AB}^{l-1})+\dis( S_l(\rho_{AB}^{l-1}),\rho_{AB}^l)\\
    &\leq (l-2)(3\sqrt{\eps}+\eps) + 3\sqrt{\eps}+ \eps\\
    &= (l-1))(3\sqrt{\eps}+\eps)\;,
    \end{split}
\end{align*}
where we first applied the triangle inequality, then we used the fact that quantum operations cannot increase the trace distance in the second inequality, and finally we used~\eqref{eq:damage-operation} and the assumption that~\eqref{eq:damage-succession} holds for $l-1$ in the third inequality.

Let $\eps_l$  be the probability that Bob fails to compute the $l$-th value correctly. We know that Bob can compute the value $f(X,y_j)$ from the state $\rho_{AB}^j$ with probability at least $1-\eps$ and that $\eps_1 \leq \eps$. We can upper bound the probability for the other $\eps_l$ adding the distance from the state $\rho_{AB}^l$ given by~\eqref{eq:damage-succession} and the error $\eps$, i.e.,
\begin{align}\label{ineq:prob-fail-step}
    \eps_l &\leq (l-1)(3\sqrt{\eps}+\eps)+\eps
    \leq l\cdot(3\sqrt{\eps}+\eps)\;
\end{align}
Thus, we can apply the union bound to obtain the claimed upper bound for the probability that Bob fails to compute any of the $m \geq 2$ values correctly
\begin{align*}
\sum_{l=1}^{m}\eps_l&\leq \eps_1 +\sum_{l=2}^{m}l\cdot(3\sqrt{\eps}+\eps)\\
&\leq\eps+(3\sqrt{\eps}+\eps)\sum_{l=2}^{m}l\\
&=\eps+(3\sqrt{\eps}+\eps)(m(m+1)/2-1)\\
&=\frac{1}{2}(3\sqrt{\eps}+\eps)(m^2+m)-3\sqrt{\eps}\;.
\end{align*}
We can in the following assume that $\eps \leq \frac{1}{64}$. Otherwise $1-2m^2\sqrt{\eps}\leq 0$ for $m\geq 2$ and the statement of the theorem is trivially true. Let $m \geq 4$. Since $\eps \leq \sqrt{\eps}/8$ and $m \leq m^2/4$, it holds that 
\begin{align*}
\frac{1}{2}(3\sqrt{\eps}+\eps)(m^2+m)-3\sqrt{\eps} &\leq \frac{1}{2}(3\sqrt{\eps}+\sqrt{\eps}/8)(m^2+m^2/4)\\
&=\tfrac{125}{64}\cdot m^2\sqrt{\eps}\\
&\leq 2m^2\sqrt{\eps}\;.
\end{align*}
It can easily be checked that $\frac{1}{2}(3\sqrt{\eps}+\eps)(m^2+m)-3\sqrt{\eps}\leq 2m^2\sqrt{\eps}$ for $m\in \{2,3\}$. This implies the claimed statement for all $2 \leq m \leq n$.
\end{proof}

\section{Proof of~\cref{thm:impossibility-sfe}}\label{app:impossibility-sfe}
\impossibilitysfe*
\begin{proof}
Let Alice choose her input uniformly at random from $\mX$. Let $\rho_{XB}$ be the real state at the end of the protocol and $\sigma_{X B Y'Z}$ the ideal state given by~\eqref{eq:secA1:sfe}, i.e.,
\begin{align*} 
\sigma_{X B Y'Z}=\sum_{y,z}P_{Y'Z}(y,z)\vecstate{y,z}\kron \sigma^{y,z}_{XB}
\end{align*}
with $\sigma^{y,z}_{XB}=\tau_{\{x\in \mX :~f(x,y) = z\}}\kron \sigma^{y,z}_{B}$. Since the function is non-trivial and, therefore, partially concealing, there exist for any input $y \in \mY$ of Bob two inputs $x,x' \in \mX$ such that $f(x,y)=f(x',y)$. If Bob has input $y$ and receives output $z:=f(x,y) \in \mZ$, he can therefore guess Alice's input with probability at most $\tfrac{1}{2}$. Therefore, the probability to guess Alice's input from the ideal state $\sigma_{X B Y'Z}$ can be at most
\begin{align*}
    \tfrac{|\mX|-2}{|\mX|} + \tfrac{1}{2}\cdot \tfrac{2}{|\mX|}= 1- \tfrac{1}{|\mX|}\;
\end{align*}
for any strategy of a dishonest Bob (simulator). The ideal state is $\eps$-close to the real state according to~\eqref{eq:secA2:sfe}. Bob can therefore guess Alice's input with probability at most  $1-\tfrac{1}{|\mX|} + \eps$ from the real state at the end of the protocol. Proposition~\ref{prop:strong-imposs} implies that a dishonest Bob can compute the function values for all his possible inputs with probability at least $1-2|\mY|^2\sqrt{\eps}$. This implies that for any $\eps$-secure protocol it must hold that
\begin{align*}
    1-2|\mY|^2\sqrt{\eps} \leq 1- \tfrac{1}{|\mX|} + \eps\;.
\end{align*}
Solving the inequality for $\eps$ implies the claimed statement.
\end{proof}

\section{Proof of~\cref{thm:sfe-reductions}}\label{app:main-result}
\sfereductions*

\begin{proof}
Let Alice's input $X$ be uniformly distributed over $\mX$ and let Bob's input be fixed to $\bar y$. Let the final state of the protocol on Alice's and Bob's system be $\rho_{AB}^{\bar y}$, when both players are honest. Since the protocol is $\eps$-secure for Alice in the malicious model, we can conclude from~\eqref{eq:secA1:sfe} that there exists an ideal state $\sigma_{XB}$, for which the distribution over all $x$ that are compatible with a given output must be uniform with respect to $B$. According to~\eqref{eq:secA2:sfe} the ideal state is $\eps$-close to the state $\rho_{XB}$. Since the conditional entropy is concave, it holds that
\begin{align} \label{eq:lower-bound-entropy}
\begin{split}
\chvn{X}{B}{\sigma}&\geq \sum_{y,z}P_{Y'Z}(y,z)\chvn{X}{B}{\sigma^{y,z}}\\&\geq \min_{y \in \mY} \cHS{X}{f(X,y)}=t\;.
\end{split}
\end{align}

Then we can derive the following lower bound on the uncertainty that Bob has about Alice's input conditioned on his state after the execution of the real protocol by using the continuity bound for the conditional entropy of cq-states~\eqref{eq:alifan-winter}
\begin{equation}\label{eq:cond-ent:lower-bound:sfe}
\begin{split}
   \chvn{X}{B}{\rho^{\bar y}} &\geq \min_{y \in \mY} \cHS{X}{f(X,y)} -\eps \cdot \log |\mX| - g(\eps)\;.
\end{split}
\end{equation}

We consider a modified protocol that starts from a state 
\begin{align*}
 \ket{\psi}_{UVB'}=\sum_{u,v} \sqrt{P_{UV}(u,v)} \cdot \ket{u,v}_{UV} \kron \ket{u,v}_{B'}\;,
\end{align*}
where the systems $V$ and $B'$ belong to Bob. Since Bob can simulate the original protocol from the modified protocol, any successful attack of Alice against the modified protocol does obviously imply a successful attack against the original protocol. Let $\rho_{XABB'}$ be the state at the end of the modified protocol. Its marginal state, $\rho_{XAB}$ , is the corresponding state at the end of the original protocol. Furthermore, the state conditioned on the classical communication is again pure. From Proposition~\ref{prop:strong-imposs} we know that Bob can compute $\{f(x,y):y \in \mY\}$ from the state $\rho_{ABB'}^{\bar y}$ with probability at least $\eps':=3|\mY|^2\sqrt{\eps}$. Condition~\eqref{eq:condition-sfe} guarantees that Bob can derive Alice's input from the computed values. Then we can apply equality~\eqref{eq:fano-q} to obtain the upper bound
\begin{align}\label{eq:cond-ent:upper-bound:sfe}
\HS{X}{BB'}_{\rho^{\bar y}} \leq  h(\eps') + \eps' \cdot \log |\mX|\;.
\end{align}
Therefore, it follows from the two inequalities~\eqref{eq:cond-ent:lower-bound:sfe} and~\eqref{eq:cond-ent:upper-bound:sfe} that
\begin{equation*}
 \begin{split}
    \cHS{X}{B}_{\rho^{\bar y}}&-\cHS{X}{BB'}_{\rho^{\bar y}}\\ &\geq t -\eps \cdot \log |\mX| - g(\eps) - h(\eps') - \eps' \cdot \log |\mX|\\
    &=t- (\eps +\eps') \cdot \log|\mX| - g(\eps)-h(\eps') \;.
\end{split}
\end{equation*}
Then we can apply inequality~\eqref{chain-rule-c} and Lemma~\ref{lem:data-processing-bound-vn} to obtain the claimed lower bound for the conditional max-entropies
\begin{equation*}
\begin{split}
    \Chmaxeps{}{U}{V}&+\Chmaxeps{}{V}{U}\\&=\max_v\log|\support( P_{U|V=v})|
+\max_u\log|\support(P_{V|U=u})|\\&\geq\cHS{X}{B}_{\rho^{\bar y}}-\cHS{X}{BB'}_{\rho^{\bar y}} \\&\geq t - (\eps +\eps') \cdot \log|\mX| - g(\eps)-h(\eps')\;. 
\end{split} 
\end{equation*}
\end{proof}

\section{Proofs of Applications}\label{app:applications}

\extendingot*
\begin{proof}
We use the lower bound of Theorem~\ref{thm:sfe-reductions} and apply the same reasoning as in~\cite{WW14}: We assume that there exists a composable protocol that implements $m + 1$ instances of $\OT{1}{n}{1}$ from $m$ instances of $\OT{1}{n}{1}$ with an error $\eps$. We can apply this protocol iteratively and implement $4m$ instances of $\OT{1}{n}{1}$ with an error of $\tilde \eps:=3m\eps$, where we assume that Bob follows the protocol. Theorem~\ref{thm:sfe-reductions} implies that 
\begin{equation*}
 m((n-1)+ \log(n))\geq (n-1)4m-(\tilde \eps+\tilde \eps')\cdot 4mn - g(\tilde \eps)-h(\tilde \eps')\;
\end{equation*}
and, therefore,
\begin{align}\label{eq:ext-ot-lower-bound}
\frac{g(\tilde \eps)+h(\tilde \eps')}{m}+(\tilde \eps+\tilde \eps')4n \geq 3(n-1)-\log(n)\;,
\end{align}
where we used the fact that $\Chmaxeps{}{U}{V}+\Chmaxeps{}{V}{U} = m((n-1)+ log(n))$ for the resource $P_{UV}$ that corresponds to $m$ instances of $\OT{1}{n}{1}$ and $\tilde \eps':=3n^2\sqrt{\tilde \eps}$. If $\tilde \eps \leq 1/\left(9\cdot3n^2\right)^2$, then $g(\tilde \eps) + h(\tilde \eps') < 1$ and $\tilde \eps + \tilde \eps' < 1/8$. In this case, it follows from inequality~\eqref{eq:ext-ot-lower-bound} that 
\begin{align*}
1+n/2 > 3(n-1)-\log(n)\;,
\end{align*}
which implies $\log(n) > \frac{5n}{2}-4$. Thus, it must hold that $\tilde \eps \geq 1/\left(9\cdot3n^2\right)^2$ and, therefore, $\eps \geq \frac{c_n}{m}$ with 
\begin{equation*}
c_n:= \frac{1}{3m\left(9\cdot3n^2\right)^2}\;.
\end{equation*}
Note that the above analysis is not optimized to be as tight as possible, but sufficiently tight to prove the existence of the claimed constant $c_n$.
\end{proof}

\extendingchoices*
\begin{proof}
$\Chmaxeps{}{U}{V}+\Chmaxeps{}{V}{U} = m(k+1)$ for the resource $P_{UV}$ that corresponds to $m$ instances of $\OT{1}{2}{k}$. We can conclude from Theorem~\ref{thm:sfe-reductions} that 
\begin{equation*}
\begin{split}
   m(k+1) &\geq  (n-1)\cdot k - (\eps+\eps')\cdot nk -g(\eps)-h(\eps')\\
   &=\left((1-\eps-\eps')\cdot n-1\right)k-g(\eps)-h(\eps')\;,
\end{split}
\end{equation*}
which implies the claimed statement.
\end{proof}

\innerproduct*
\begin{proof}
$\Chmaxeps{}{U}{V}+\Chmaxeps{}{V}{U} = 2m$ for the resource $P_{UV}$ that corresponds to $m$ instances of $\OT{1}{2}{1}$. Let $e_i \in \{0,1\}^n$ be the string that has a one at the $i$-th position and is zero otherwise. Let $\mS:=\{e_i:~1\leq i \leq n\}$. Then the restriction of the inner-product function to inputs from $\{0,1\}^n \times \mS$ satisfies condition~\eqref{eq:condition-sfe} with $|S|=n$. Furthermore, it holds for all $z \in \sbin$ and all $y \in \mY$ that $|\{x \in \mX: f(x, y)=z\}| \geq 2^{n-1}$. Thus, we can apply Theorem~\ref{thm:sfe-reductions} to obtain the claimed lower bound.
\end{proof}

\equalityfunction*
\begin{proof}
$\Chmaxeps{}{U}{V}+\Chmaxeps{}{V}{U} = 2m$ for the resource $P_{UV}$ that corresponds to $m$ instances of $\OT{1}{2}{1}$. We can restrict the input domains of both players to the same subset of $\sbin^n$ of size $2^k$. The restricted function is still non-redundant and it holds that $\min_{y \in \mY} \cHS{X}{f(X,y)} =  (1-2^{-k})\cdot k$. Thus, Theorem~\ref{thm:sfe-reductions} implies that
\begin{align*}
    2m&\geq  (1-2^{-k})\cdot k - (\eps +\eps') \cdot \log|\mX| - g(\eps)-h(\eps')\\
   & \geq (1-\eps-\eps')\cdot k - g(\eps)-h(\eps') -1\;,
\end{align*}
where $\eps':=2^{2k+1}\sqrt{\eps}$ and $g: x \mapsto (1+x)\cdot h\left(\frac{x}{1+x}\right)$.
\end{proof}


\end{document}